\title{On the Complexity of Noncommutative Polynomial Factorization}
\author{V. Arvind\thanks{Institute of Mathematical Sciences, Chennai,
    India, \texttt{email: arvind@imsc.res.in}} \and Pushkar S
  Joglekar\thanks{Vishwakarma Institute of Technology, Pune, India,
    \texttt{email: joglekar.pushkar@gmail.com}} \and Gaurav
  Rattan\thanks{Institute of Mathematical Sciences, Chennai, India,
    \texttt{email: grattan@imsc.res.in}}}
\date{}
\newtheorem{theorem}{Theorem}[section]
\newtheorem{corollary}[theorem]{Corollary}
\newtheorem{definition}[theorem]{Definition}
\newtheorem{lemma}[theorem]{Lemma}
\newtheorem{claim}[theorem]{Claim}
\newtheorem{remark}[theorem]{Remark}
\newenvironment{proof} {\noindent{\it Proof. }} {{\qed}}
\newenvironment{proofof}[1]{\noindent{\it Proof of #1. }} {{\qed}}
\def\qed{\hspace*{\fill} $\Box$\par\medskip}
\newcommand{\F}{\mathbb{F}}
\renewcommand{\angle}[1]{\langle #1 \rangle}
\newcommand{\FX}{\F\angle{X}}
\newcommand{\fx}{\F\angle{x_1,\ldots,x_n}}
\newcommand{\pd}{\partial}
\newcommand{\NC}{\mathrm{NC}}
\newcommand{\MA}{\mathrm{MA}}
\newcommand{\coNP}{\mathrm{coNP}}
\newcommand{\NP}{\mathrm{NP}}
\newcommand{\size}{\mathrm{size}}
\newcommand{\var}{\mathrm{Var}}
\newcommand{\poly}{\mathrm{poly}}
\newcommand{\SOP}{\textsc{SOP}}
\begin{document}
\maketitle 

\begin{abstract}
In this paper we study the complexity of factorization of polynomials
in the free noncommutative ring $\F\angle{x_1,x_2,\ldots,x_n}$ of
polynomials over the field $\F$ and noncommuting variables
$x_1,x_2,\ldots,x_n$. Our main results are the following:

\begin{itemize}
\item[$\bullet$] Although $\fx$ is not a unique factorization ring, we
  note that \emph{variable-disjoint} factorization in $\fx$ has the
  uniqueness property. Furthermore, we prove that computing the
  variable-disjoint factorization is polynomial-time equivalent to
  Polynomial Identity Testing (both when the input polynomial is given
  by an arithmetic circuit or an algebraic branching program). We also
  show that variable-disjoint factorization in the black-box setting
  can be efficiently computed (where the factors computed will be also
  given by black-boxes, analogous to the work \cite{KT91} in the
  commutative setting).

\item[$\bullet$] As a consequence of the previous result we show that
  homogeneous noncommutative polynomials and multilinear
  noncommutative polynomials have unique factorizations in the usual
  sense, which can be efficiently computed.

\item[$\bullet$] Finally, we discuss a polynomial decomposition
  problem in $\fx$ which is a natural generalization of homogeneous
  polynomial factorization and prove some complexity bounds for it.
\end{itemize}
\end{abstract}

\section{Introduction}\label{one}

Let $\F$ be any field and $X=\{x_1,x_2,\ldots,x_n\}$ be a set of $n$
free noncommuting variables. Let $X^*$ denote the set of all free
words (which are monomials) over the alphabet $X$ with concatenation
of words as the monoid operation and the empty word $\epsilon$ as
identity element.

The \emph{free noncommutative ring} $\FX$ consists of all finite
$\F$-linear combinations of monomials in $X^*$, where the ring
addition $+$ is coefficient-wise addition and the ring multiplication
$*$ is the usual convolution product. More precisely, let $f,g\in\FX$
and let $f(m)\in\F$ denote the coefficient of monomial $m$ in
polynomial $f$. Then we can write $f=\sum_m f(m) m$ and $g=\sum_m g(m)
m$, and in the product polynomial $fg$ for each monomial $m$ we have

\[
fg(m)=\sum_{m_1m_2=m} f(m_1)g(m_2).
\]

The \emph{degree} of a monomial $m\in X^*$ is the length of the
monomial $m$, and the degree $\deg f$ of a polynomial $f\in\FX$ is the
degree of a largest degree monomial in $f$ with nonzero coefficient.
For polynomials $f,g\in\FX$ we clearly have $\deg (fg) = \deg f + \deg
g$.

A \emph{nontrivial factorization} of a polynomial $f\in\FX$ is an
expression of $f$ as a product $f=gh$ of polynomials $g,h\in\FX$ such
that $\deg g > 0$ and $\deg h > 0$. A polynomial $f\in\FX$ is
\emph{irreducible} if it has no nontrivial factorization and is
\emph{reducible} otherwise.  For instance, all degree $1$ polynomials
in $\FX$ are irreducible. Clearly, by repeated factorization every
polynomial in $\FX$ can be expressed as a product of
irreducibles. 

In this paper we study the algorithmic complexity of polynomial
factorization in the free ring $\FX$.

\subsection*{Polynomial Factorization Problem}

The problem of polynomial factorization in the \emph{commutative}
polynomial ring $\F[x_1,x_2,\ldots,x_n]$ is a classical well-studied
problem in algorithmic complexity culminating in Kaltofen's celebrated
efficient factorization algorithm \cite{Kalt}. Kaltofen's algorithm
builds on efficient algorithms for univariate polynomial
factorization; there are deterministic polynomial-time algorithms over
rationals and over fields of unary characteristic and randomized
polynomial-time over large characteristic fields (the textbook
\cite{vzGbook} contains a comprehensive excellent treatment of the
subject). The basic idea in Kaltofen's algorithm is essentially a
randomized reduction from multivariate factorization to univariate
factorization using Hilbert's irreducibility theorem. Thus, we can say
that Kaltofen's algorithm uses randomization in two ways: the first is
in the application of Hilbert's irreducibility theorem, and the second
is in dealing with \emph{univariate} polynomial factorization over
fields of large characteristic. In a recent paper Kopparty et al
\cite{KSS} have shown that the first of these requirements of
randomness can be eliminated, assuming an efficient algorithm as
subroutine for the problem of \emph{polynomial identity testing} for
small degree polynomials given by circuits. More precisely, it is
shown in \cite{KSS} that over finite fields of unary characteristic
(or over rationals) polynomial identity testing is deterministic
polynomial-time equivalent to multivariate polynomial factorization.

Thus, in the commutative setting it turns out that the complexity of
multivariate polynomial factorization is closely related to polynomial
identity testing (whose deterministic complexity is known to be
related to proving superpolynomial size arithmetic circuit lower bounds).


\subsection*{Noncommutative Polynomial Factorization}

The study of noncommutative arithmetic computation was initiated by
Nisan \cite{Ni91} in which he showed exponential size lower bounds for
algebraic branching programs that compute the noncommutative permanent
or the noncommutative determinant. Noncommutative polynomial identity
testing was studied in \cite{BW05,RS05}. In \cite{BW05} a randomized
polynomial time algorithm is shown for identity testing of polynomial
degree noncommutative arithmetic circuits. For algebraic branching
programs \cite{RS05} give a deterministic polynomial-time
algorithm. Proving superpolynomial size lower bounds for
noncommutative arithmetic circuits computing the noncommutative
permanent is open. Likewise, obtaining a deterministic polynomial-time
identity test for polynomial degree noncommutative circuits is open.\\

In this context, it is interesting to ask if we can relate the
complexity of noncommutative factorization to noncommutative
polynomial identity testing. However, there are various mathematical
issues that arise in the study of noncommutative polynomial
factorization.

Unlike in the commutative setting, the noncommutative polynomial ring
$\FX$ is \emph{not} a unique factorization ring. A well-known example
is the polynomial
\[
xyx+x
\]
which has two factorizations: $x(yx+1)$ and $(xy+1)x$. Both $xy+1$ and
$yx+1$ are irreducible polynomials in $\FX$. 

There is a detailed theory of factorization in noncommutative
rings~\cite{cohn,cohn-ufd}. We will mention an interesting result
on the structure of polynomial factorizations in the ring $R=\FX$.

Two elements $a, b\in R$ are \emph{similar} if there are elements $a',
b'\in R$ such that $ab'=a'b$, and (i) $a$ and $a'$ do not have common
nontrivial left factors, (ii) $b$ and $b'$ do not have common
nontrivial right factors.

Among other results, Cohn \cite{cohn-ufd} has shown the following
interesting theorem about factorizations in the ring $R=\FX$.

\begin{theorem}[Cohn's theorem]
For a polynomial $a\in\FX$ let
\[
a  =  a_1a_2\dots a_r \textrm{ and } a  =  b_1b_2\dots b_s
\]
be any two factorizations of $a$ into irreducible polynomials in
$\FX$.  Then $r=s$, and there is a permutation $\pi$ of the indices
$\{1,2,\dots,r\}$ such that $a_i$ and $b_{\pi(i)}$ are similar
  polynomials for $1\le i\le r$.
\end{theorem}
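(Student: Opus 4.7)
The plan is to translate factorizations of $a$ into composition series for the cyclic left $R$-module $M = R/Ra$, where $R = \FX$, and then invoke a Jordan--H\"older theorem adapted to free ideal rings.

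First I would set up the module-theoretic dictionary. A factorization $a = a_1 a_2 \cdots a_r$ gives rise to the descending chain of principal left ideals
\[
R \supseteq Ra_r \supseteq Ra_{r-1}a_r \supseteq \cdots \supseteq Ra_1 a_2 \cdots a_r = Ra,
\]
and the map $r \mapsto r\, a_{i+1}\cdots a_r$ induces a left $R$-module isomorphism $R/Ra_i \cong Ra_{i+1}\cdots a_r / Ra_i a_{i+1}\cdots a_r$; this is well defined and has kernel exactly $Ra_i$ because $\FX$ is a domain. Hence each irreducible factorization of $a$ produces a filtration of $M$ whose consecutive quotients are $R/Ra_1,\ldots,R/Ra_r$.

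Second, I would prove the key \emph{similarity lemma}: two nonzero elements $a,b\in R$ are similar in the sense of the excerpt if and only if $R/Ra \cong R/Rb$ as left $R$-modules. The ``only if'' direction uses the defining relation $ab' = a'b$ together with the two coprimeness conditions to verify that $1 + Rb \mapsto b' + Ra$ induces a well-defined $R$-linear isomorphism. The converse extracts the witnesses $a', b'$ from any chosen module isomorphism $R/Rb \to R/Ra$, using that the image of $1 + Rb$ lifts to some $b'\in R$ satisfying $ab' \in Ra$, and showing the resulting $a'$ and $b'$ are forced to be left/right coprime with $a$ and $b$ respectively. Combined with the observation that $a_i$ is irreducible iff $R/Ra_i$ is a simple object in the lattice of principal submodules, the two factorizations of $a$ become two composition series of $M$ with simple factors $R/Ra_i$ and $R/Rb_j$.

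The final step is a Jordan--H\"older / Schreier refinement argument. The main obstacle is that $\FX$ is noncommutative and not Artinian, so one cannot directly quote the classical Jordan--H\"older theorem; the substitute is Cohn's structural theory of free ideal rings \cite{cohn,cohn-ufd}. Since $\FX$ is a fir, the lattice of principal left submodules of $M$ is modular and satisfies the isomorphism theorems in the form needed for Schreier refinement: any two finite chains between $Ra$ and $R$ admit refinements whose successive simple quotients are pairwise isomorphic up to permutation. Applying this to our two chains (which are already maximal because the factors are irreducible) yields $r=s$ and a permutation $\pi$ with $R/Ra_i \cong R/Rb_{\pi(i)}$ for every $i$, whence the similarity lemma delivers the required similarities $a_i \sim b_{\pi(i)}$.
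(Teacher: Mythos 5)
The paper never proves this theorem: it is imported verbatim from Cohn \cite{cohn-ufd}, so there is no in-paper argument to measure you against. Your outline is in fact the standard route Cohn himself takes --- factorizations of $a$ as chains of principal left ideals between $Ra$ and $R$, similarity read as isomorphism of the cyclic modules $R/Ra\cong R/Rb$, and a Jordan--H\"older/Schreier refinement carried out in the lattice of principal left ideals, which is exactly where the fir (indeed 2-fir) property of $\FX$ enters: it guarantees that sums and intersections of principal left ideals containing $Ra$ are again principal, so that factor lattice is a modular sublattice of the full submodule lattice and perspectivity of intervals translates into isomorphisms of the corresponding quotients. Your remark that ``simple'' must mean an atom of this lattice is important and correct, since $R/Rp$ is generally \emph{not} a simple module for irreducible $p$ (e.g.\ $Rx\subsetneq Rx+Ry\subsetneq R$).

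Two technical points in the similarity lemma need repair, though neither threatens the architecture. First, your explicit map goes the wrong way: $1+Rb\mapsto b'+Ra$ is well defined only if $bb'\in Ra$, which already fails for the paper's own example $a=xy+1$, $b=yx+1$, $a'=b'=x$ (here $(yx+1)x\notin R(xy+1)$). The relation $ab'=a'b$ gives $ab'\in Rb$, so the natural well-defined map is $R/Ra\to R/Rb$, $x+Ra\mapsto xb'+Rb$. Second, surjectivity of that map is the statement $Rb'+Rb=R$, i.e.\ left \emph{comaximality}, whereas the paper's definition of similarity only supplies coprimality (no common nontrivial factors); that a coprime relation is automatically comaximal is false in a general domain and is itself a 2-fir lemma of Cohn's that you must cite or prove --- it cannot be ``verified'' formally from the two coprimeness conditions. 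Fortunately the direction your theorem actually needs (module isomorphism implies similarity in the paper's sense) is the unproblematic one: from an isomorphism $\psi:R/Ra\to R/Rb$ with $\psi(1+Ra)=b'+Rb$ one gets $ab'=a'b$, right coprimality of $b,b'$ from surjectivity, and left coprimality of $a,a'$ from injectivity. With these corrections, and with the fir-specific lattice facts explicitly attributed rather than assumed, your sketch is a faithful reconstruction of Cohn's proof.
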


For instance, consider the two factorizations of $xyx+x$ above. We
note that polynomials $xy+1$ and $yx+1$ are similar. It is easy to
construct examples of degree $d$ polynomials in $\FX$ that have
$2^{\Omega(d)}$ distinct factorizations. Cohn \cite{cohn} discusses a
number of interesting properties of factorizations in $\FX$. But it is
not clear how to algorithmically exploit these to obtain an efficient
algorithm in the general case.\\

\noindent\textbf{Our Results}\\

In this paper, we study some \emph{restricted} cases of polynomial
factorization in the ring $\FX$ and prove the following results.

\begin{itemize}

\item[$\bullet$] We consider \emph{variable-disjoint} factorization of
  polynomials in $\FX$ into \emph{variable-disjoint irreducibles}.  It
  turns out that such factorizations are unique and computing them is
  polynomial-time equivalent to polynomial identity testing (for both
  noncommutative arithmetic circuits and algebraic branching
  programs).

\item[$\bullet$] It turns out that we can apply the algorithm for
  variable-disjoint factorization to two special cases of
  factorization in $\FX$: homogeneous polynomials and multilinear
  polynomials. These polynomials do have unique factorizations and we
  obtain efficient algorithms for computing them.
\end{itemize}

$\bullet$ We also study a natural polynomial decomposition problem for
noncommutative polynomials and obtain complexity results.


\section{Variable-disjoint Factorization Problem}\label{two}

In this section we consider the problem of factorizing a
noncommutative polynomial $f\in\FX$ into variable disjoint factors.

For a polynomial $f\in\FX$ let $Var(f)\subseteq X$ denote the set of
all variables occurring in nonzero monomials of $f$.

\begin{definition}
A nontrivial \emph{variable-disjoint factorization} of a polynomial $f
\in \FX$ is a factorization
\[
f=gh
\]
such that $\deg g >0$ and $\deg h >0$, and $Var(g)\cap
Var(h)=\emptyset$.

A polynomial $f$ is \emph{variable-disjoint irreducible} if does not
have a nontrivial variable-disjoint factorization.

\end{definition}

Clearly, all irreducible polynomials are also variable-disjoint
irreducible. But the converse is not true. For instance, the familiar
polynomial $xyx+x$ is variable-disjoint irreducible but not
irreducible. Furthermore, all univariate polynomials in $\FX$ are
variable-disjoint irreducible.

We will study the complexity of variable-disjoint factorization for
noncommutative polynomials. First of all, it is interesting that
although we do not have the usual unique factorization in the ring
$\FX$, we can prove that every polynomial in $\FX$ has a \emph{unique}
variable-disjoint factorization into variable-disjoint irreducible
polynomials.\footnote{Uniqueness of the factors is upto scalar
  multiplication.}

We can exploit the properties we use to show uniqueness of
variable-disjoint factorization for computing the variable-disjoint
factorization. Given $f \in \FX$ as input by a noncommutative
arithmetic circuit the problem of computing arithmetic circuits for
the variable-disjoint irreducible factors of $f$ is polynomial-time
reducible to PIT for noncommutative arithmetic circuits. An analogous
result holds for $f$ given by an algebraic branching programs
(ABPs). Hence, there is a deterministic polynomial-time algorithm for
computing the variable-disjoint factorization of $f$ given by an ABP.
Also in the case when the polynomial $f\in\FX$ is given as input by a
black-box (appropriately defined) we give an efficient algorithm that
gives black-box access to each variable-disjoint irreducible factor of
$f$.

\begin{remark}
Factorization of \emph{commutative} polynomials into variable-disjoint
factors is studied by Shpilka and Volkovich in \rm{\cite{SV10}}. They
show a deterministic reduction to polynomial identity
testing. However, the techniques used in their paper are specific to
commutative rings, involving scalar substitutions, and do not appear
useful in the noncommutative case. Our techniques for factorization
are simple, essentially based on computing left and right partial
derivatives of noncommutative polynomials given by circuits or
branching programs.
\end{remark}

\subsection{Uniqueness of variable-disjoint factorization}

Although the ring $\FX$ is not a unique factorization domain we show
that factorization into variable-disjoint irreducible factors is
unique. 


For a polynomial $f\in\FX$ let $mon(f)$ denote the set of nonzero
monomials occurring in $f$.

\begin{lemma}\label{mondisjoint}
Let $f= gh$ such that $Var(g)\cap Var(h)= \phi$ and
$|Var(g)|,|Var(h)|\geq 1$.  Then 
\[
mon(f)= \{mw | m \in mon(g), w \in mon(h) \}.
\]
Moreover, the coefficient of $mw$ in $f$ is the product of the
coefficients of $m$ in $g$ and $w$ in $h$.
\end{lemma}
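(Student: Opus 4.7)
The plan is to unfold $f = gh$ via the convolution rule from the introduction and exploit $Var(g) \cap Var(h) = \emptyset$ to show that every monomial of interest splits in essentially one way. The only tool I need is a trivial \emph{support property}: for any $p \in \FX$ and any monomial $u$, if $p(u) \neq 0$ then every letter appearing in $u$ must lie in $Var(p)$.

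For the inclusion $mon(f) \subseteq \{mw \mid m \in mon(g),\, w \in mon(h)\}$, I would pick any $u$ with $f(u) \neq 0$ and expand
$$f(u) = \sum_{u = u_1 u_2} g(u_1)\, h(u_2).$$
Some pair $(u_1, u_2)$ with $u_1 u_2 = u$ must contribute a nonzero product, and then automatically $u_1 \in mon(g)$ and $u_2 \in mon(h)$, giving the desired decomposition of $u$.

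For the reverse inclusion together with the coefficient identity, I would fix $m \in mon(g)$ and $w \in mon(h)$ and analyse
$$f(mw) = \sum_{m_1 m_2 = mw} g(m_1)\, h(m_2).$$
The crux is to argue that the only pair contributing nonzero is $(m_1, m_2) = (m, w)$. Assume $g(m_1) h(m_2) \neq 0$; by the support property, every letter of $m_1$ lies in $Var(g)$ and every letter of $m_2$ lies in $Var(h)$. On the other hand, $m$ is a word over $Var(g)$ and $w$ is a word over $Var(h)$, with these two alphabets disjoint. Hence in $mw$ there is a unique position at which the letters switch from $Var(g)$ to $Var(h)$, namely after the $|m|$-th letter, and the break between $m_1$ and $m_2$ is forced to occur exactly there. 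This yields $m_1 = m$, $m_2 = w$, and therefore $f(mw) = g(m)\, h(w) \neq 0$, simultaneously giving the coefficient formula and the reverse inclusion.

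The only mildly delicate point is the corner case in which $g$ or $h$ has a nonzero constant term, so $\epsilon$ lies in $mon(g)$ or $mon(h)$; here the picture of a ``transition position'' in $mw$ degenerates. The same disjoint-support argument still applies: if, say, $m = \epsilon$, then any prefix of $w$ that $g$ can possibly pick up must use only letters in $Var(g)$, but every prefix of $w$ uses only letters in $Var(h)$, so the only option is the empty prefix, forcing $m_1 = \epsilon$ and $m_2 = w$. Beyond this bookkeeping, the statement is a direct consequence of how the convolution product interacts with variable-disjoint supports, and no further machinery (partial derivatives, identity testing, Cohn's theorem) is needed at this stage.
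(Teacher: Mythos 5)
Your proof is correct and follows essentially the same route as the paper's: both arguments use the disjointness of $Var(g)$ and $Var(h)$ to show that any splitting of the word $mw$ into a $g$-monomial followed by an $h$-monomial is forced to break exactly after $m$, so the convolution sum for $f(mw)$ has the single nonzero term $g(m)h(w)$. Your explicit treatment of the forward inclusion and of the empty-monomial (constant-term) corner case is a mild elaboration of what the paper leaves implicit, not a different method.
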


\begin{proof}
Let $m \in mon(g)$ and $w \in mon(h)$. We will argue that the monomial
$mw$ is in $mon(f)$ and can be obtained in a unique way in the product
$gh$. Namely, by multiplying $m \in mon(g)$ with $w \in mon(h)$.

Assume to the contrary that for some $u \in mon(g)$ and $v\in mon(h)$,
$u \neq m$ we have $mw= uv$. Note that $m$ and $u$ are words over
$Var(g)$ and $w,v$ are words over $Var(h)$. Clearly, $|m|\ne |u|$
because $m \ne u$. Without loss of generality, we assume that $|u|>
|m|$. As $mw= uv$, it follows that for some word $s\in X^*$, $|s|>0$,
we have $u= ms$ and $w=sv$.

That implies $Var(s) \subseteq Var(g)\cap Var(h)$ which contradicts
variable disjointness of $g$ and $h$.
\end{proof}

\begin{lemma}\label{var-subset}
Let $f=g.h$ and $f=u.v$ be two nontrivial variable-disjoint
factorizations of $f$. That is, 
\begin{eqnarray*}
Var(g)\cap Var(h) = \phi\\
Var(u)\cap Var(v) = \phi. 
\end{eqnarray*}
Then either $Var(g)\subseteq Var(u)$ and $Var(h)\supseteq Var(v)$ or
$Var(u)\subseteq Var(g)$ and $Var(v)\supseteq Var(h)$.
\end{lemma}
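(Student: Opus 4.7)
The plan is to leverage Lemma \ref{mondisjoint} to turn the claim into a purely combinatorial statement about partitions of $\mathrm{Var}(f)$ induced by the two factorizations, and then produce a contradiction using the positions of variables inside a single monomial of $f$.

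First I would observe that by Lemma \ref{mondisjoint}, $\mathrm{Var}(f) = \mathrm{Var}(g) \sqcup \mathrm{Var}(h) = \mathrm{Var}(u) \sqcup \mathrm{Var}(v)$ (both disjoint unions). So I have two partitions of the same set $\mathrm{Var}(f)$. This immediately reduces the statement to showing that $\mathrm{Var}(g)$ and $\mathrm{Var}(u)$ are comparable under inclusion: indeed, $\mathrm{Var}(g)\subseteq\mathrm{Var}(u)$ automatically forces $\mathrm{Var}(v)\subseteq\mathrm{Var}(h)$ (take complements inside $\mathrm{Var}(f)$), and symmetrically for the other case.

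Next I would argue by contradiction: suppose neither containment holds. Then there exist $x\in\mathrm{Var}(g)\cap\mathrm{Var}(v)$ and $y\in\mathrm{Var}(h)\cap\mathrm{Var}(u)$. Pick any $m_1\in mon(g)$ containing the variable $x$ and any $w_1\in mon(h)$ containing the variable $y$. By Lemma \ref{mondisjoint}, the word $t=m_1 w_1$ lies in $mon(f)$, and it also lies in $mon(u\cdot v)$; hence $t=m'w'$ for some $m'\in mon(u)$, $w'\in mon(v)$.

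The key step is the positional argument. Because $\mathrm{Var}(g)\cap\mathrm{Var}(h)=\emptyset$, every occurrence of $x$ in any monomial of $f$ must lie within the $g$-prefix of that monomial; in particular its position in $t$ is $\le |m_1|$. Because $x\in\mathrm{Var}(v)$ and $\mathrm{Var}(u)\cap\mathrm{Var}(v)=\emptyset$, the same occurrence must lie in the $v$-suffix, so its position is $>|m'|$. This yields $|m'|<|m_1|$. Running the identical argument for an occurrence of $y$ in $t$ (which lies in $w_1$, hence in position $>|m_1|$, and must lie in the $u$-prefix $m'$, hence in position $\le |m'|$) gives $|m'|>|m_1|$, a contradiction. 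The main delicate point is precisely this positional bookkeeping: I need the disjointness of variable sets on \emph{both} sides to pin each occurrence of $x$ and $y$ unambiguously to one side of each split, and this is where the hypothesis is used in full force.
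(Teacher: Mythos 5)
Your proof is correct and follows essentially the same route as the paper: after reducing to comparability of $Var(g)$ and $Var(u)$ via the partition $Var(f)=Var(g)\uplus Var(h)=Var(u)\uplus Var(v)$, you derive a contradiction from a monomial of $f$ containing both a variable $x\in Var(g)\cap Var(v)$ and a variable $y\in Var(h)\cap Var(u)$, using the two disjointness hypotheses to pin their positions relative to each split. Your prefix-length bookkeeping is just a more explicit rendering of the paper's ``$x$ occurs before $y$'' versus ``$y$ occurs before $x$'' argument.
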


\begin{proof}
Suppose to the contrary that $x \in Var(g)\setminus Var(u)$ and $y \in
Var(u) \setminus Var(g)$. Let $m$ be any monomial of $f$ in which
variable $y$ occurs, and let $m=m_1m_2$ such that $m_1\in mon(g)$ and
$m_2\in mon(h)$. Clearly, $y$ must occur in $m_2$. Similarly, for any
monomial $m'$ of $f$ in which variable $x$ occurs, if $m'=m'_1m'_2$,
for $m'_1\in mon(g)$ and $m'_2\in mon(h)$, then $x$ must occur in
$m'_1$. Thus, $x\in Var(g)$ and $y\in Var(h)$. Similarly, it also
holds that $y\in Var(u)$ and $x\in Var(v)$. Thus we have:

\begin{enumerate}
\item $x \in Var(g)$ and $y \in Var(h)$ 
\item $y \in Var(u)$ and $x\in Var(v)$.
\end{enumerate}

Hence, there are monomials of $f$ in which both $x$ and $y$
occur. Furthermore, (1) above implies that in each monomial of $f$
containing \emph{both} $x$ and $y$, $x$ always occurs before $y$. On
the other hand, (2) implies $y$ occurs before $x$ in each such
monomial of $f$. This contradicts our assumption.
\end{proof}

\begin{lemma}\label{equaldisjoint}
Let $f \in \FX$ and suppose $f=gh$ and $f=uv$ are two
variable-disjoint factorizations of $f$ such that $Var(g)= Var(u)$.
Then $g=\alpha u$ and $h= \beta v$ for scalars $\alpha, \beta \in\F$.
\end{lemma}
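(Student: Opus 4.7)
The plan is to use Lemma \ref{mondisjoint} to transfer everything to the level of monomial sets and coefficients, exploiting the fact that when two disjoint variable sets are concatenated, the concatenation split is unique. First I would note that variable-disjointness in both factorizations together with the hypothesis $Var(g)=Var(u)$ forces $Var(h)=Var(v)$, since both equal $Var(f)\setminus Var(g)$.

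The main step is to show $mon(g)=mon(u)$ and $mon(h)=mon(v)$. Pick any $m_0\in mon(g)$ and $w\in mon(h)$. By Lemma \ref{mondisjoint} applied to $f=gh$, the monomial $m_0w$ lies in $mon(f)$; applying the same lemma to $f=uv$ gives a decomposition $m_0w=m'w'$ with $m'\in mon(u)$ and $w'\in mon(v)$. Because $Var(u)=Var(g)$ and $Var(v)=Var(h)$ are disjoint, the string $m_0w$ has a unique split into a $Var(u)$-word followed by a $Var(v)$-word: the split must occur exactly where the letters switch from $Var(g)$ to $Var(h)$, i.e. at the boundary between $m_0$ and $w$. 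Therefore $m'=m_0$ and $w'=w$, so $m_0\in mon(u)$ and $w\in mon(v)$. Symmetry gives the reverse inclusions, establishing $mon(g)=mon(u)$ and $mon(h)=mon(v)$.

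The scalars then fall out of the coefficient identity in Lemma \ref{mondisjoint}, which now reads $g(m)h(w)=u(m)v(w)$ for every $m\in mon(g)$ and $w\in mon(h)$. Fixing any $m_0\in mon(g)$ and setting $\beta=u(m_0)/g(m_0)$ gives $h(w)=\beta v(w)$ for all $w\in mon(h)$, whence $h=\beta v$. The symmetric argument with a fixed $w_0\in mon(h)$ yields $g=\alpha u$ with $\alpha=v(w_0)/h(w_0)$, and consistency with $gh=uv$ forces $\alpha\beta=1$.

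I do not foresee any real obstacle; the only point requiring care is the uniqueness of the concatenation split, but this is immediate from $Var(g)\cap Var(h)=\emptyset$, since once letters cross into $Var(h)$ inside the string $m_0w$ no subsequent letter can lie in $Var(g)$.
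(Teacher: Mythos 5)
Your proposal is correct and follows essentially the same route as the paper's proof: deduce $Var(h)=Var(v)$, use the uniqueness of the split of each monomial of $f$ into a $Var(g)$-word followed by a $Var(h)$-word (via Lemma~\ref{mondisjoint}) to get $mon(g)=mon(u)$ and $mon(h)=mon(v)$, and then extract the scalars. The only (minor) divergence is in the last step: the paper takes left/right partial derivatives of $f$ with respect to a \emph{maximum-degree} monomial of $g$ (resp.\ of $h$), whereas you read the scalars directly off the coefficient identity $g(m)h(w)=u(m)v(w)$ from Lemma~\ref{mondisjoint}, which works for an arbitrary fixed monomial and is a clean, equally valid variant.
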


\begin{proof}
As $f=gh=uv$ are variable-disjoint factorizations, by
Lemma~\ref{mondisjoint} each monomial $m\in mon(f)$ is uniquely
expressible as a product $m=m_1m_2$ for $m_1\in mon(g)$ and $m_2\in
mon(h)$, and as a product $m=m_1'm_2'$ for $m_1'\in mon(u)$ and
$m_2'\in mon(v)$. 

As $Var(g)=Var(u)$, we notice that $Var(h)=Var(v)$, because
$Var(f)=Var(g)\uplus Var(h)$ and $Var(f)=Var(u)\uplus Var(v)$.  Now,
from $m=m_1m_2=m_1'm_2'$ it immediately follows that $m_1=m_1'$ and
$m_2=m_2'$. Hence, $mon(g)=mon(u)$ and $mon(h)=mon(v)$. 

Furthermore, if $m$ is a monomial of maximum degree in $g$, by taking
the left partial derivative of $f$ w.r.t.\ $m$ we have
\[
\frac{\pd^\ell f}{\pd m}= \alpha' h = \beta' v,
\]
where $\alpha'$ and $\beta'$ are coefficient of $m$ in $g$ and $u$
respectively. It follows that $h= \beta v$ for some $\beta\in\F$.
Similarly, by taking the right partial derivative of $f$ w.r.t.\  a
maximum degree monomial in $h$ we can see that $g=\alpha u$ for
$\alpha\in\F$.
\end{proof}

We now prove the uniqueness of variable-disjoint factorizations in
$\FX$.

\begin{theorem}\label{thm-disjfact}
Every polynomial in $\FX$ has a unique variable-disjoint factorization
as a product of variable-disjoint irreducible factors, where the
uniqueness is upto scalar multiples of the irreducible factors.
\end{theorem}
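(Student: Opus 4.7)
The plan is to prove uniqueness by strong induction on $|Var(f)|$ (existence is immediate, since each nontrivial variable-disjoint factorization strictly splits the variable set, so the recursion terminates). The base case $|Var(f)|\le 1$ is trivial as no nontrivial variable-disjoint factorization can exist. For the inductive step, consider two factorizations $f = p_1 \cdots p_r = q_1 \cdots q_s$ into variable-disjoint irreducibles. The key intermediate claim is that, after possibly swapping the two factorizations, $Var(p_1) = Var(q_1)$; given this, Lemma~\ref{equaldisjoint} applied to $f = p_1 \cdot (p_2 \cdots p_r) = q_1 \cdot (q_2 \cdots q_s)$ yields $p_1 = \alpha q_1$ and $p_2 \cdots p_r = \beta \, q_2 \cdots q_s$ for scalars $\alpha, \beta \in \F$, after which the induction hypothesis applied to $p_2 \cdots p_r$ (which has strictly fewer variables) finishes the argument.

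To establish the key claim, I would apply Lemma~\ref{var-subset} to assume WLOG that $Var(p_1) \subseteq Var(q_1)$ and suppose for contradiction that the inclusion is strict. Let $k^*$ be the largest index with $Var(p_1 \cdots p_{k^*}) \subseteq Var(q_1)$; one checks $1 \le k^* < r$. If $Var(p_1 \cdots p_{k^*}) = Var(q_1)$, Lemma~\ref{equaldisjoint} applied to the regrouping $f = (p_1 \cdots p_{k^*})(p_{k^*+1} \cdots p_r) = q_1(q_2 \cdots q_s)$ gives $p_1 \cdots p_{k^*} = \alpha q_1$, leading either to $q_1$ being variable-disjoint reducible (when $k^* \ge 2$) or to $Var(p_1) = Var(q_1)$ (when $k^* = 1$) --- both contradictions. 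Otherwise the containment is strict, and applying Lemma~\ref{var-subset} to the regrouping $f = (p_1 \cdots p_{k^*+1}) \cdot (p_{k^*+2} \cdots p_r) = q_1 \cdot (q_2 \cdots q_s)$ forces $Var(q_1) \subseteq Var(p_1 \cdots p_{k^*+1})$, so $p := p_{k^*+1}$ has variables in both the nonempty sets $A := Var(q_1) \setminus Var(p_1 \cdots p_{k^*})$ and $B := Var(p) \setminus Var(q_1)$.

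The main obstacle is then to contradict the variable-disjoint irreducibility of $p_{k^*+1}$ in this last case by exhibiting a variable-disjoint factorization of it. Every monomial of $p$ has the form $ab$ with $a \in A^*$ and $b \in B^*$, since in each monomial of $f$ all $Var(q_1)$-letters precede all $Var(q_2 \cdots q_s)$-letters. Writing $g := p_1 \cdots p_{k^*}$ and $h := p_{k^*+2} \cdots p_r$, comparing the coefficient of $m_g \, ab \, m_h$ in $f$ computed via the $gph$ and $q_1 v$ decompositions (for fixed $m_g, m_h$ of nonzero coefficient) yields $p(ab) = T(a)\, W(b)$ for well-defined scalar-valued functions $T, W$ on $A^*, B^*$; the independence of $T(a)$ from the choice of $m_g$ follows by dividing two instances of this coefficient identity. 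Forming $p' := \sum_a T(a) a$ and $p'' := \sum_b W(b) b$ then gives $p = p' p''$ with $Var(p') \subseteq A$ and $Var(p'') \subseteq B$, a nontrivial variable-disjoint factorization, contradicting irreducibility.
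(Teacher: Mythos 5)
Your proposal is correct, and although it rests on the same three pillars as the paper (Lemmas~\ref{mondisjoint}, \ref{var-subset} and \ref{equaldisjoint}, an induction, and matching of leftmost factors), it is organized differently and actually proves a step that the paper's own proof treats as immediate. The paper inducts on degree and peels off the left factor $g$ with \emph{minimal} $Var(g)$, invoking Lemmas~\ref{var-subset} and \ref{equaldisjoint} to conclude it is the unique leftmost factor; what is not spelled out there is why, for an arbitrary factorization $f=p_1p_2\cdots p_r$ into variable-disjoint irreducibles, strict containment $Var(g)\subsetneq Var(p_1)$ is impossible, i.e., why a first irreducible factor with a strictly larger variable set would in fact split. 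Your argument supplies exactly this: after regrouping at the maximal index $k^*$ with $Var(p_1\cdots p_{k^*})\subseteq Var(q_1)$, the coefficient identity $g(m_g)h(m_h)\,p(ab)=q_1(m_g a)\,v(b\,m_h)$ (a direct consequence of Lemma~\ref{mondisjoint} applied to both decompositions of $f$) forces $p=p_{k^*+1}$ to factor as $p'p''$ over the disjoint nonempty sets $A$ and $B$, contradicting its variable-disjoint irreducibility; this coefficient-splitting step has no counterpart written in the paper and is the genuine added value of your write-up (note that fixing one $m_g$ and one $m_h$ with nonzero coefficients already makes $T,W$ well defined, so the "independence of the choice of $m_g$" discussion is not even needed). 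The same point can also be closed a bit more directly in the two-factor form $f=gh=uv$ with $Var(g)\subsetneq Var(u)$, in the spirit of Lemma~\ref{equaldisjoint}: taking a maximum-degree monomial $m^*$ of $g$ one gets $g(m^*)\,h=\bigl(\frac{\pd^\ell u}{\pd m^*}\bigr)v$, hence $g(m^*)\,u=g\cdot\frac{\pd^\ell u}{\pd m^*}$ after cancelling $v$ (the ring $\FX$ has no zero divisors), and one checks that this right factor uses no variable of $g$, so $u$ is variable-disjoint reducible. Finally, you silently pass over the degenerate cases $r=1$ or $s=1$, and the case $k^*+1=r$ where Lemma~\ref{var-subset} is not literally applicable (the needed inclusion is then trivial); these are harmless, as is your switch from induction on degree to induction on $|Var(f)|$.
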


\begin{proof}
 We prove the theorem by induction on the degree $d$ of polynomials in
 $\FX$. The base case $d=1$ is obvious, since degree $1$ polynomials
 are irreducible and hence also variable-disjoint irreducible. Assume
 as induction hypothesis that the theorem holds for polynomials of
 degree less than $d$. Let $f\in\FX$ be a polynomial of degree $d$. If
 $f$ is variable-disjoint irreducible there is nothing to prove.
 Suppose $f$ has nontrivial variable-disjoint factors. Let $f=gh$ be a
 \emph{nontrivial} variable-disjoint factorization such that $Var(g)$
 is \emph{minimum}. Such a factorization must exist because of Lemmas
 \ref{mondisjoint} and \ref{var-subset}. Furthermore, the set $Var(g)$
 is uniquely defined and by Lemma~\ref{equaldisjoint} the left factor
 $g$ is unique upto scalar multiples. Applying induction hypothesis to
 the polynomial $h$ now completes the induction step.
\end{proof}

 \subsection{Equivalence with PIT}\label{disjfact-pit}

\begin{theorem}\label{disjfactalgo}
Let $f\in\FX$ be a polynomial as input instance for variable-disjoint
factorization. Then
\begin{enumerate}
\item If $f$ is input by an arithmetic circuit of degree $d$ and size
  $s$ there is a randomized $\poly(s,d)$ time algorithm that
  factorizes $f$ into variable-disjoint irreducible factors.
\item
If $f$ is input by an algebraic branching program there is a
deterministic polynomial-time algorithm that factorizes $f$ into its
variable-disjoint irreducible factors.
\end{enumerate}
\end{theorem}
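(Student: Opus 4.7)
The plan is to turn the uniqueness argument (Theorem~\ref{thm-disjfact}) into an algorithm that peels off one variable-disjoint irreducible factor per iteration, reconstructing each factor via the partial-derivative identity of Lemma~\ref{equaldisjoint}. The only non-elementary primitive needed is noncommutative polynomial identity testing, so the randomized bound in~(1) will follow from the PIT algorithm of~\cite{BW05} for arithmetic circuits, and the deterministic bound in~(2) from the deterministic PIT algorithm of~\cite{RS05} for ABPs.

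I would first locate any monomial $\mu^*$ of maximum degree $d$ in $f$: extract the degree-$d$ homogeneous component as a circuit/ABP, then extend a prefix one variable at a time, at each step posing the PIT question ``is $\pd^\ell r/\pd x \neq 0$?'' on the current residue $r$ to decide which variable to append. Let $f = f_1 f_2 \cdots f_k$ be the (unknown) variable-disjoint irreducible factorization; since $\deg f = \sum_j \deg f_j$, Lemma~\ref{mondisjoint} forces a unique decomposition $\mu^* = m_1 m_2 \cdots m_k$ in which each $m_j$ is maximum-degree in $f_j$. The main loop scans every nontrivial split $\mu^* = \mu_1 \mu_2$ with $|\mu_1| \in \{1,\ldots,d-1\}$, constructing circuits/ABPs for
\[
h_0 \;=\; \frac{\pd^\ell f}{\pd \mu_1}, \qquad g_0 \;=\; \frac{\pd^r f}{\pd \mu_2}
\]
(both derivatives admit $\poly(s,d)$-size constructions: on circuits via the standard product rule, and on ABPs by modifying edges adjacent to the source or sink), and then applies PIT to test (i)~$Var(g_0) \cap Var(h_0) = \emptyset$, by checking for each variable $x$ that $g_0 \equiv g_0|_{x=0}$ or $h_0 \equiv h_0|_{x=0}$, and (ii)~$g_0 \cdot h_0 = c\, f$ for the scalar $c$ fixed by matching the $\mu^*$-coefficient. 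The smallest $|\mu_1|$ passing both tests is exactly $|m_1|$; at that point $g_0$ is a scalar multiple of the leftmost variable-disjoint irreducible factor $f_1$ and $h_0$ is a scalar multiple of $f_2 \cdots f_k$, and we recurse on $h_0$. The recursion terminates in at most $k \le d$ rounds.

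The correctness point worth spelling out---as opposed to grinding through---is that tests (i) and (ii) jointly accept exactly the aligned splits $|\mu_1| \in \{|m_1|, |m_1|+|m_2|, \ldots, |m_1|+\cdots+|m_{k-1}|\}$. For any aligned split, iterating Lemma~\ref{equaldisjoint} yields $g_0 \propto f_1\cdots f_j$ and $h_0 \propto f_{j+1}\cdots f_k$, which trivially pass both tests. Conversely, any pair $(g_0,h_0)$ passing both tests is by definition a variable-disjoint factorization of a nonzero scalar multiple of $f$, so by the uniqueness part of Theorem~\ref{thm-disjfact} it must arise from an aligned split; a split strictly inside some $m_j$ would produce a nontrivial variable-disjoint factorization of the variable-disjoint irreducible $f_j$, which is impossible. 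The main obstacle is really the bookkeeping: verifying that constructing $\mu^*$, the partial-derivative gadgets, and all PIT queries compose within the $\poly(s,d)$ budget. In the circuit case the only source of randomness is the PIT subroutine; in the ABP case everything---including~\cite{RS05}'s PIT and the partial-derivative construction on ABPs---is deterministic.
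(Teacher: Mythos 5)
Your proposal is correct and follows essentially the same route as the paper: prefix-search a maximum-degree monomial via PIT, scan the splits of that monomial, form the left/right partial derivatives as circuits/ABPs, test variable-disjointness and the scaled product identity with PIT, peel off the leftmost variable-disjoint irreducible factor, and recurse, with \cite{BW05} giving the randomized bound for circuits and \cite{RS05} the deterministic bound for ABPs. Your explicit justification that the accepted splits are exactly the aligned ones (via Lemma~\ref{equaldisjoint} and the uniqueness in Theorem~\ref{thm-disjfact}) matches what the paper leaves implicit.
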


\begin{proof}
We first consider the most general case of the algorithm when $f \in
\FX$ is given by an arithmetic circuit of polynomially bounded
syntactic degree. The algorithm specializes to the other cases
too. The algorithm must compute an arithmetic circuit for each
variable-disjoint factor of $f$. We explain the polynomial-time
reduction to Polynomial Identity Testing (PIT) for noncommutative
arithmetic circuits.

Let $d=\deg f$ which can be computed by first computing circuits for
the polynomially many homogeneous parts of $f$, and then using PIT as
subroutine on each of them.

Next, we compute a monomial $m \in mon(f)$ of maximum degree $d$ with
polynomially many subroutine calls to PIT. The basic idea here is to
do a prefix search for $m$. More precisely, suppose we have computed a
prefix $m'$ such that the left partial derivative $\frac{\pd^\ell
  f}{\pd m'}$ computes a polynomial of degree $d-|m'|$ then we extend
$m'$ with the first variable $x_i\in X$ such that $\frac{\pd^\ell
  f}{\pd m'x}$ computes a polynomial of degree $d-|m'|-1$. Proceeding
thus, we will compute the lexicographically first monomial $m\in
mon(f)$ of degree $d$.

Now, starting with $d_1=1$ we look at all factorizations $m=m_1.m_2$
of monomial $m$ with $|m_1|=d_1$.

Let $h=\frac{\pd^\ell f}{\pd m_1}$ and $g=\frac{\pd^r f}{\pd m_2}$,
circuits for which can be efficiently computed from the given circuit
for $f$. Let $\alpha$, $\beta$ and $\gamma$ be the coefficients of $m$
in $f$, $m_1$ in $g$, and $m_2$ in $h$, respectively (which can be
computed in deterministic polynomial time \cite{AMS09}).

Next we compute $Var(g)$ and $Var(h)$. Notice that $x_i\not\in Var(g)$
if and only if $g$ and $g_{|_{x_i=0}}$ are not identical
polynomials. Thus, we can easily determine $Var(g)$ and $Var(h)$ with
$n$ subroutine calls to PIT. 

Clearly, if $Var(g)\cap Var(h) \neq \emptyset$ then we do not have a
candidate variable-disjoint factorization that splits $m$ as $m_1m_2$
and we continue with incrementing the value of $d_1$. Else, we check
if

\begin{equation}\label{checkfactor}
f={\frac{\alpha}{\beta\gamma}}gh,
\end{equation}

with a subroutine call to PIT. If $f={\frac{\alpha}{\beta\gamma}}gh$
then ${\frac{\alpha}{\beta\gamma}}g$ is the unique leftmost
variable-disjoint irreducible factor of $f$(upto scalar
multiplication), and we continue the computation. Otherwise, we
continue the search with incrementing the value of $d_1$.

In the general step, suppose we have already computed $f=g_1g_2\dots
g_i h_i$, where $g_1,g_2,\dots,g_i$ are the successive
variable-disjoint irreducible factors from the left. There will be
a corresponding factorization of the monomial $m$ as
\[
m=m_1m_2\dots m_i m_i'
\]
where $m_j$ occurs in $g_j$, $1\le j\le i$ and $m_i'$ occurs in $h_i$.
Notice that the polynomial $h_i = \frac{\pd^\ell f}{\pd m_1m_2\dots
  m_i}$ can be computed by a small noncommutative arithmetic circuit
obtained from $f$ by partial derivatives. This describes the overall
algorithm proving that variable-disjoint factorization of $f$ is
deterministic polynomial-time reducible to PIT when $f$ is given by
arithmetic circuits. The algorithm outputs arithmetic circuits for
each of the variable-disjoint irreducible factors of $f$.

We note that the algorithm specializes to the case when $f$ is given
by an ABP. The variable-disjoint irreducible factors are computed by
ABPs in this case.

Finally, to complete the proof we note that in the case of
noncommutative arithmetic circuits of polynomial degree there is a
randomized polynomial time PIT algorithm \cite{BW05}. For polynomials
given as ABPs, there is a deterministic polynomial-time PIT algorithm
\cite{RS05}.
\end{proof}

Next we consider variable-disjoint factorization of polynomials input
in sparse representation and show that the problem is solvable in
deterministic logspace (even by constant-depth circuits). Recall that
$\mathrm{AC}^0$ circuits mean a family of circuits $\{C_n\}$, where
  $C_n$ for length $n$ inputs, such that $C_n$ has polynomially
  bounded size and constant-depth and is allowed unbounded fanin AND
  and OR gates. The class of $\mathrm{TC}^0$ circuits is similarly
  defined, but is additionally allowed unbounded fanin majority gates.
  The logspace uniformity condition means that there is a logspace
  transducer that outputs $C_n$ on input $1^n$ for each $n$.

\begin{theorem}\label{disjfact-sparse}
Let $f\in\FX$ be a polynomial input instance for variable-disjoint
factorization given in sparse representation.
\begin{enumerate}
\item[(a)] When $\F$ is a fixed finite field the variable-disjoint
  factorization is computable in deterministic logspace (more
  precisely, even by logspace-uniform $\mathrm{AC}^0$ circuits).
\item[(b)] When $\F$ is the field of rationals the variable-disjoint
  factorization is computable in deterministic logspace (even by
  logspace-uniform $\mathrm{TC}^0$ circuits).
\end{enumerate}
\end{theorem}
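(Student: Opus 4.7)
The plan is to carry out the factorization by syntactic operations directly on the list of monomials. By Lemmas~\ref{mondisjoint} and~\ref{var-subset}, it suffices to (i) compute the ordered partition $Var(f) = V_1 \uplus \cdots \uplus V_k$ corresponding to the variable-disjoint irreducible factors, and then (ii) extract each $f_t$ from the ``slices'' of monomials of $f$ that lie over $V_t$.

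For step (i), I would first scan the input to compute $Var(f)$, and then for every ordered pair of variables $(x_i, x_j)$ compute the bit $b_{ij}$ equal to $1$ iff some monomial of $f$ contains $x_i$ at a strictly earlier position than $x_j$. Each $b_{ij}$ is an OR over (monomial, position-pair) indices, hence in $\mathrm{AC}^0$. By Lemma~\ref{var-subset}, the finest variable-disjoint split places $x_i$ and $x_j$ in the same block whenever $b_{ij} = b_{ji} = 1$. Letting $G$ be the graph on $Var(f)$ with these ``must-merge'' edges, the blocks $V_t$ are the connected components of $G$, and the linear order on blocks is forced by the rule ``$V_s$ precedes $V_t$ iff $b_{xy} = 1$ for some (equivalently every) co-occurring pair $x \in V_s$, $y \in V_t$''; well-definedness and transitivity of this order are exactly the content of Lemma~\ref{var-subset}.

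For step (ii), Lemma~\ref{mondisjoint} guarantees that every $m \in mon(f)$ factors uniquely as $m = m^{(1)} \cdots m^{(k)}$ with $m^{(t)} \in V_t^*$, and the coefficient of $m$ in $f$ is the product of the coefficients of the $m^{(t)}$ in the $f_t$. Thus $mon(f_t) = \{m^{(t)} : m \in mon(f)\}$, which is obtained by projecting each input monomial to the letters belonging to $V_t$. To assign coefficients, I fix a reference monomial $m^* \in mon(f)$ with coefficient $c^*$, normalize so that the coefficient of $m^{*(t)}$ in $f_t$ equals $1$ for $t \geq 2$ (and $c^*$ for $t = 1$), and recover every remaining coefficient of $f_t$ by a single field division dictated by the product identity of Lemma~\ref{mondisjoint} applied to any input monomial whose $t$-th slice equals the desired $m^{(t)}$.

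The complexity bounds now fall out. The steps that compute $Var(f)$, the $b_{ij}$, the block ordering, the per-monomial slicing, and the coefficient divisions are each implemented by a constant number of unbounded-fanin AND/OR stages combined with the ambient field arithmetic; over a fixed finite field this arithmetic is built into $\mathrm{AC}^0$, giving part~(a), while over $\mathbb{Q}$ the division step on polynomially-sized numerators and denominators sits in logspace-uniform $\mathrm{TC}^0$, giving part~(b). The step I expect to be the main obstacle is the connected-components computation for $G$: it is in $\mathrm{L}$ via Reingold's undirected-connectivity algorithm, which immediately yields the deterministic logspace claim of the theorem; the stronger circuit-class bounds require showing that, because the edges of $G$ are themselves defined by the simple $\mathrm{AC}^0$ predicates $b_{ij} \wedge b_{ji}$, the partition can be assembled by a uniform bounded-depth circuit without invoking generic transitive closure.
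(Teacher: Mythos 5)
There is a genuine gap, and it is in the very first step: your rule for identifying the blocks of the partition is not correct. The condition $b_{ij}=b_{ji}=1$ is only a \emph{necessary} condition for $x_i$ and $x_j$ to lie in the same irreducible block (variables in distinct blocks always co-occur in one fixed order), but it is far from sufficient, because by Lemma~\ref{mondisjoint} an actual variable-disjoint factorization also forces the Cartesian-product structure on $mon(f)$ and multiplicativity of coefficients, neither of which follows from the absence of order conflicts. Concretely, $f=xy+x+y$ is variable-disjoint irreducible (any split $g(x)h(y)$ forces a nonzero constant term), yet it has no pair with $b_{ij}=b_{ji}=1$; your algorithm would output the blocks $\{x\}$, $\{y\}$ and slice out two univariate ``factors'' containing the empty monomial, whose product has a constant term and hence is not $f$. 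The polynomial $xy+z$ shows the same failure and additionally breaks the well-definedness of your block ordering, since $x$ and $z$ never co-occur; Lemma~\ref{var-subset} only justifies these claims when the corresponding factorizations are known to exist. Crucially, your outline never verifies the product identity $f=f_1f_2\cdots f_k$, so the error goes undetected, and adding such a check does not repair the argument — you would still need a different way to find the correct (coarser) partition. This is exactly where the paper's proof proceeds differently: it exploits the fact that factors occupy contiguous \emph{position ranges}, enumerates all $O(d^2)$ candidate ranges $(a,b)$ of a maximum-degree monomial $m=m_1m_2m_3$, manufactures the candidate middle factor as $\frac{\pd^\ell}{\pd m_1}\bigl(\frac{\pd^r f}{\pd m_3}\bigr)$ (and the outer factors similarly), and then \emph{verifies} $f=g_1g_2g_3$ up to a scalar with a constant-depth circuit; it is this verification, not any occurrence pattern of variables, that certifies a factor.

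A secondary problem is the complexity claim itself. Even where your merging rule happened to give the right partition, the connected-components computation is a real obstruction to the stated circuit bounds: the must-merge graph can be an arbitrary graph (realize an edge $\{x_i,x_j\}$ by including both monomials $x_ix_j$ and $x_jx_i$), and undirected connectivity is not computable in $\mathrm{AC}^0$, so the logspace-uniform $\mathrm{AC}^0$/$\mathrm{TC}^0$ bounds of the theorem would not follow from your outline (Reingold's algorithm would only recover the weaker logspace statement). The paper avoids graph connectivity altogether: since candidate factors are indexed by degree ranges of a single maximum-degree monomial, all candidates can be formed and tested in parallel in constant depth, with the only field-dependent cost being the comparison of products of a constant number of coefficients, which is free over a fixed finite field and needs only $\mathrm{TC}^0$ integer arithmetic over the rationals.
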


\begin{proof}
We briefly sketch the simple circuit constructions. Let us first
consider the case when $\F$ is a fixed finite field. The idea behind
the $\mathrm{AC}^0$ circuit for it is to try all pairs of indices
$1\le a < b < d$ in parallel and check if $f$ has an irreducible
factor of degree $b-a+1$ located in the $(a,b)$ range. More precisely,
we will check in parallel for a variable-disjoint factorization
$f=g_1g_2g_3$, where $g_1$ is of degree $a-1$, $g_2$ is
variable-disjoint irreducible of degree $b-a+1$, and $g_3$ is of
degree $d-b$. If such a factorization exists then we will compute the
unique (upto scalar multiples) irreducible polynomial $g_3$ located in
the $(a,b)$ range of degrees.

In order to carry out this computation, we take a highest degree
nonzero monomial $m$ of $f$ and factor it as $m=m_1m_2m_3$, where
$m_2$ is located in the $(a,b)$ range of $m$. Clearly, the polynomial
\[
\frac{\pd^\ell}{\pd m_1}\left(\frac{\pd^r f}{\pd m_3}\right)
\]
has to be $g_2$ upto a scalar multiple. Likewise, $\frac{\pd^\ell
  f}{\pd m_1m_2}$ is $g_3$ and $\frac{\pd^r f}{\pd m_2m_3}$ is $g_1$
upto scalar multiples. Since $f$ is given in sparse representation, we
can compute these partial derivatives for each monomial in parallel in
constant depth.

We can then check using a polynomial size constant-depth circuit if
$f=g_1g_2g_3$ upto an overall scalar multiple (since we have assumed
$\F$ is a constant size finite field). For a given $a$, the least $b$
for which such a factorization occurs will give us the
variable-disjoint irreducible factor located in the $(a,b)$ range. For
each $a < d$ we can carry out this logspace computation looking for
the variable-disjoint irreducible factor (if there is one) located at
$(a,b)$ range for some $b>a$.

In this manner, the entire constant-depth circuit computation will
output all variable-disjoint irreducible factors of $f$ from left to
right. This concludes the proof for the constant size finite field
case. 

Over the field of rationals, we use the same approach as above. The
problem of checking if $f=g_1g_2g_3$ upto an overall scalar multiple
will require integer multiplication of a constant number of integers.
This can be carried out by a constant-depth threshold circuit. This
gives the $\mathrm{TC}^0$ upper bound.
\end{proof}

We now observe that PIT for noncommutative arithmetic circuits is also
deterministic polynomial-time reducible to variable-disjoint
factorization, making the problems polynomial-time equivalent.

\begin{theorem}
Polynomial identity testing for noncommutative polynomials $f\in\FX$
given by arithmetic circuits (of polynomial degree) is deterministic
polynomial-time equivalent to variable-disjoint factorization of
polynomials given by noncommutative arithmetic circuits.
\end{theorem}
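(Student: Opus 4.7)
The plan is to observe that one direction of the equivalence is already established: the variable-disjoint factorization of a polynomial given by an arithmetic circuit reduces in deterministic polynomial time to PIT for noncommutative arithmetic circuits, by Theorem~\ref{disjfactalgo}. Thus the only new content needed is the reverse reduction, showing that PIT reduces in deterministic polynomial time to variable-disjoint factorization. I would do this with a short padding construction that uses two fresh variables to package the nonzero-ness of $f$ into a variable-disjoint factorability property.

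Concretely, given an input circuit for $f \in \FX$ on variables $X = \{x_1,\ldots,x_n\}$, I would construct a circuit for the polynomial $P := y + y \cdot f \cdot z$ on the enlarged variable set $X \cup \{y,z\}$, where $y, z$ are two variables not in $X$. This only costs a constant number of additional multiplication and addition gates on top of the circuit for $f$. Algebraically, $P = y \cdot (1 + f \cdot z)$, and the variable sets $\{y\}$ and $\{z\} \cup \var(f)$ are disjoint by construction. I would then invoke the variable-disjoint factorization algorithm on $P$ and simply count the number of variable-disjoint irreducible factors returned.

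The correctness argument splits into the two cases: if $f \equiv 0$, then $P = y$, which has degree $1$ and is therefore variable-disjoint irreducible, so the algorithm returns a single factor; if $f \not\equiv 0$, then $\deg(1 + f z) = \deg f + 1 \geq 1$ and both sides of $P = y \cdot (1 + fz)$ have positive degree on disjoint variable sets, giving a genuine nontrivial variable-disjoint factorization, so by the uniqueness in Theorem~\ref{thm-disjfact} the algorithm must return at least two factors. Thus PIT on $f$ reduces to deciding whether the oracle returns exactly one factor on input $P$. I do not foresee a genuine obstacle: the only real design choice is picking a reduction polynomial whose variable-disjoint reducibility is controlled cleanly by the vanishing of $f$, and the choice $P = y(1 + fz)$ works precisely because the two fresh variables $y, z$ act as a packaging that becomes detachable from the rest of $P$ exactly when $f$ is nonzero.
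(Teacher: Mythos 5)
Your proposal is correct and follows essentially the same route as the paper: one direction is quoted from Theorem~\ref{disjfactalgo}, and for the other you pad $f$ with two fresh variables $y,z$ so that variable-disjoint reducibility of the padded polynomial detects whether $f\equiv 0$. The paper's padding is the slightly simpler $f+yz$, which has a nontrivial variable-disjoint factorization iff $f\equiv 0$, whereas your $y+y\cdot f\cdot z=y(1+fz)$ is variable-disjoint reducible iff $f\not\equiv 0$; both variants of the same trick yield a valid deterministic polynomial-time reduction from PIT to variable-disjoint factorization.
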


\begin{proof}
Clearly proof of Theorem \ref{disjfactalgo} gives a reduction from variable-disjoint factorization to PIT. To see the other direction, let $f\in\FX$ and $y,z\not\in X$ be two new variables. We observe that the polynomial $f\in\FX$ is identically zero iff the polynomial $f+ y.z$ has a nontrivial variable-disjoint factorization. This gives a reduction from PIT to variable-disjoint factorization.
\end{proof}

\subsection{Black-Box Variable-Disjoint Factorization Algorithm}

In this subsection we give an algorithm for variable-disjoint
factorization when the input polynomial $f\in\FX$ is given by
black-box access. We explain the black-box model below:

In this model, the polynomial $f\in\FX$ can be evaluated on any
$n$-tuple of matrices $(M_1,M_2,\ldots,M_n)$ where each $M_i$ is a $t
\times t$ matrix over $\F$ (or a suitable extension field of $\F$) and
get the resulting $t \times t$ matrix $f(M_1,M_2,\ldots,M_n)$ as
output.

The algorithm for black-box variable-disjoint factorization takes as
input such a black-box access for $f$ and outputs black-boxes for each
variable-disjoint irreducible factor of $f$. More precisely, the
factorization algorithm, on input $i$, makes calls to the black-box
for $f$ and works as black-box for the $i^{th}$ variable-disjoint
irreducible factor of $f$, for each $i$.

The efficiency of the algorithm is measured in terms of the number of
calls to the black-box and the size $t$ of the matrices. In this
section we will design a variable disjoint factorization algorithm
that makes polynomially many black-box queries to $f$ on matrices of
polynomially bounded dimension. We state the theorem formally and then
prove it in the rest of this section.

\begin{theorem}\label{vardisj-thm}
Suppose $f\in\FX$ is a polynomial of degree bounded by $D$, given as
input via black-box access.  Let $f = f_1 f_2 \dots f_r$ be the
variable-disjoint factorization of $f$. Then there is a
polynomial-time algorithm that, on input $i$, computes black-box
access to the $i^{th}$ factor $f_i$.
\end{theorem}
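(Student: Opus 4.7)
The plan is to execute the algorithm of Theorem \ref{disjfactalgo} in the black-box model. That algorithm needs three primitives from the input: (a) black-box access to left and right partial derivatives $\partial^\ell f/\partial m_1$ and $\partial^r f/\partial m_2$ for arbitrary monomials, (b) extraction of scalar coefficients of specific monomials, and (c) polynomial identity testing. Primitive (c) is available in the black-box noncommutative model by evaluating on random matrices of polynomial dimension (or on a polynomial-size matrix hitting set), and (b) will reduce to (a) by evaluating max-degree partial derivatives at the identity, so the crux is implementing (a) with only polynomial blow-up in matrix dimension.

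The key construction is a prefix-matching automaton encoded as block matrices. Fix $m_1 = x_{i_1} x_{i_2} \cdots x_{i_k}$ and, given $t \times t$ target matrices $M_1, \ldots, M_n$, define $\hat{M}_j$ to be a $(k+1)\times(k+1)$ block matrix of $t \times t$ blocks whose only nonzero blocks are $[\hat{M}_j]_{s,s+1} = I_t$ whenever $s < k$ and $j = i_{s+1}$, together with $[\hat{M}_j]_{k,k} = M_j$. A straightforward induction on word length shows that the $(0,k)$-block of $w(\hat{M}_1, \ldots, \hat{M}_n)$ equals $w'(M_1, \ldots, M_n)$ if $w = m_1 w'$, and is $0$ otherwise. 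Summing over monomials, the $(0,k)$-block of $f(\hat{M}_1, \ldots, \hat{M}_n)$ is exactly $(\partial^\ell f/\partial m_1)(M_1, \ldots, M_n)$, obtained from a single call to the original black box on matrices of dimension at most $(D+1)t$. A mirror construction (suffix-matching automaton on the right) yields black-box access to $\partial^r f/\partial m_2$.

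With (a)--(c) in hand, the algorithm mimics Theorem \ref{disjfactalgo}. First compute $\deg f$ by substituting scaled matrices $(tM_1, \ldots, tM_n)$, interpolating in the scalar $t$ to isolate each homogeneous part, and applying PIT to each. Next, locate the lex-first maximum-degree monomial $m$ of $f$ by a prefix search that, at each step, extends the current prefix $m'$ by the least $x_i$ for which $\partial^\ell f/\partial(m' x_i)$ still has degree $\deg f - |m'| - 1$. For $d_1 = 1, 2, \ldots$, split $m = m_1 m_2$ with $|m_1| = d_1$, build black boxes for $g := \partial^r f/\partial m_2$ and $h := \partial^\ell f/\partial m_1$, determine $Var(g)$ and $Var(h)$ by PIT-testing $g - g|_{x_j=0}$ (and likewise $h$) for each $x_j$, and, when these sets are disjoint, verify the candidate identity $\beta \gamma \cdot f = \alpha \cdot g h$ via one further PIT query; here $\alpha, \beta, \gamma$ are the scalar coefficients of $m, m_1, m_2$ in $f, g, h$, each read off by evaluating the corresponding max-degree partial derivative at the identity substitution. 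By Lemma \ref{equaldisjoint} and Theorem \ref{thm-disjfact}, the least $d_1$ that passes the test yields the leftmost variable-disjoint irreducible factor $f_1 = (\alpha/\beta\gamma)\, g$ as a black box; recursing on $h$ (itself just a left partial-derivative black box on $f$) peels off $f_2, \ldots, f_r$, and on query $i$ we output the $i$-th factor produced.

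The principal obstacle is the polynomial-blow-up simulation of partial derivatives, which the block-matrix automaton construction resolves. Crucially, iteration does not compound: each factor $f_i$ is ultimately accessed as a single left partial derivative against the prefix $m_1 \cdots m_{i-1}$ of $m$ composed with a single right partial derivative against the suffix $m_{i+1} \cdots m_r$, so every intermediate black-box query is on matrices of dimension $O(Dt)$ and the total number of queries is $\poly(n, D)$.
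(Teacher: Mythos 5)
Your proposal is correct and follows essentially the same route as the paper: simulate the white-box algorithm of Theorem~\ref{disjfactalgo}, where the crux is black-box access to left and right partial derivatives obtained by querying $f$ on structured block matrices of dimension $O(Dt)$ encoding a prefix/suffix-matching automaton, together with black-box PIT by random matrix evaluation. Your particular gadget (a self-loop carrying $M_j$ at the accepting state) is a slightly cleaner variant of the paper's construction in Lemma~\ref{bblemma2} --- it treats the right partial derivative symmetrically, whereas the paper uses a degree-indexed block-diagonal family of matrices there --- but the argument is the same in substance.
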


\begin{proof}
In the proof we sketch the algorithm to give an overview. We then
explain the main technical parts in the next two lemmas. The black-box
algorithm closely follows the white-box algorithm already described in
the first part of this section. The main steps of the algorithm are:

\begin{enumerate}
\item We compute a nonzero monomial $m$ of highest degree $d\le D$ in
  $f$ by making black-box queries to $f$. We can also compute the
  coefficient $f(m)$ of $m$ by making black-box queries to $f$.

\item Then we compute the factorization of this monomial
\[
m = m_1 m_2 \dots m_r
\]
which, as in the white-box case, is precisely how the monomial $m$
factorizes according to the unique variable-disjoint factorization
$f=f_1 f_2 \dots f_r$ of $f$ (that is $m_i \in mon(f_i)$).

\item In order to find this factorization of the monomial $m$, it
  suffices to compute all factorizations $m=m'm''$ of $m$ such that
  $f=(\frac{\pd^\ell f}{\pd m'})(\frac{\pd^r f}{\pd m''})$ is a
  variable disjoint factorization (this is done using
  Equation~\ref{checkfactor}, exactly as in the proof of
  Theorem~\ref{disjfactalgo}).

\end{enumerate}

Finding all such factorizations will allow us to compute the
$m_i$. However, since we have only black-box access to $f$ we will
achieve this by creating and working with black-box access to the
partial derivatives $\frac{\pd^\ell f}{\pd m'}$ and $\frac{\pd^r
  f}{\pd m''})$ for each factorization $m=m'm''$. We explain
the entire procedure in detail.

Starting with $d_1=1$ we look at all factorizations $m=m'm''$ of
monomial $m$ with $|m'|=d_1$.

Suppose we have computed black-boxes for $h=\frac{\pd^\ell f}{\pd m'}$
and $g=\frac{\pd^r f}{\pd m''}$. Let $\alpha$, $\beta$ and $\gamma$ be
the coefficients of $m$ in $f$, $m'$ in $g$, and $m''$ in $h$, which
can be computed using black-box queries to $f$, $g$ and $h$,
respectively.

Next we compute $Var(g)$ and $Var(h)$. Notice that $x_i\not\in Var(g)$
if and only if $g$ and $g_{|_{x_i=0}}$ are not identical
polynomials. This amounts to performing PIT with black-box access to
$g$.  Thus, we can easily determine $Var(g)$ and $Var(h)$ by
appropriately using black-box PIT.

Clearly, if $Var(g)\cap Var(h) \neq \emptyset$ then we do not have a
candidate variable-disjoint factorization that splits $m$ as $m_1m_2$
and we continue with incrementing the value of $d_1$. Else, we check
if

\begin{equation}
f={\frac{\alpha}{\beta\gamma}}gh,
\end{equation}

with a subroutine call to PIT (like Equation~\ref{checkfactor} in
Theorem~\ref{disjfactalgo}). If $f={\frac{\alpha}{\beta\gamma}}gh$
then ${\frac{\alpha}{\beta\gamma}}g$ is the unique leftmost
variable-disjoint irreducible factor of $f$ (upto scalar
multiplication), and we continue the computation. Otherwise, we
continue the search with incrementing the value of $d_1$.

Suppose we have computed the monomial factorization $m=m_1m_2\dots m_i
m_i'$. There is a corresponding factorization of $f$ as $f=f_1f_2\dots
f_i h_i$, where $f_1,f_2,\dots,f_i$ are the successive
variable-disjoint irreducible factors from the left, and $m_j$ occurs
in $f_j$, $1\le j\le i$ and $m_i'$ occurs in $h_i$. Then we have
black-box access to $h_i = \frac{\pd^\ell f}{\pd m_1m_2\dots m_i}$ and
we can find its leftmost irreducible variable-disjoint factor as
explained above. This completes the overall algorithm sketch for
efficiently computing black-boxes for the irreducible
variable-disjoint factors of $f$ given by black-box.
\end{proof}

\begin{remark}
The variable-disjoint irreducible factors of $f$ are unique only upto
scalar multiples. However, we note that the algorithm in
Theorem~\ref{vardisj-thm} computes as black-box some \emph{fixed}
scalar multiple for each variable-disjoint irreducible factor
$f_i$. Since we can compute the coefficient of the monomial $m$ in $f$
(where $m$ is computed in the proof of Theorem~\ref{vardisj-thm}), we
can even ensure that the product $f_1f_2\dots f_r$ equals $f$ by
appropriate scaling.
\end{remark}

\begin{lemma}\label{bblemma1}
 Given a polynomial $f$ of degree $d$ with black-box access, we can
 compute a degree-$d$ nonzero monomial of $f$, if it exists, with at
 most $nd$ calls to the black-box on $(D+1)2d \times (D+1)2d$ matrices.
\end{lemma}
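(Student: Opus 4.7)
The plan is a left-to-right prefix-search that constructs a degree-$d$ nonzero monomial $m=x_{j_1}x_{j_2}\cdots x_{j_d}$ of $f$ one letter at a time. At step $k+1$, having committed to a prefix $p=x_{j_1}\cdots x_{j_k}$ that starts some nonzero degree-$d$ monomial of $f$, I iterate over $i=1,2,\ldots,n$ in turn and set $j_{k+1}$ to the smallest $i$ for which $p\cdot x_i$ still starts a nonzero degree-$d$ monomial. The inductive invariant guarantees that such an $i$ exists at every step; with $d$ steps and at most $n$ candidates per step, this accounts for the claimed $nd$ black-box queries. Equivalently, the test for each candidate $i$ is: does the partial derivative $\frac{\pd^\ell f}{\pd (p\, x_i)}$ have a nonzero homogeneous component of degree $d-k-1$? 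The nontrivial task is to answer this with a single call to the black-box for $f$.

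The query is built from a block-upper-triangular substitution that couples a prefix-matching shift automaton to the standard matrix substitution used for noncommutative PIT. For the test corresponding to candidate $i$, define, for each $s\in\{1,\ldots,n\}$, a matrix
\[
N_s \;=\; \begin{pmatrix} A_s & C_s \\ 0 & R_s \end{pmatrix},
\]
where $A_s$ is a small shift matrix that advances the automaton state by one precisely when $x_s$ is the next letter required by $p\,x_i$, the coupling block $C_s$ transfers control to the suffix block exactly when the automaton has just finished matching all of $p\,x_i$, and $R_s$ is the BW05-style substitution matrix for noncommutative polynomials of degree at most $D$. Tracking products $N_{s_1}N_{s_2}\cdots N_{s_\ell}$ block by block shows that the cross block vanishes unless $s_1\cdots s_{k+1}=p\,x_i$, in which case it equals $R_{s_{k+2}}\cdots R_{s_\ell}$. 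Summing over monomials of $f$, the relevant entry of the cross block of $f(N_1,\ldots,N_n)$ evaluates to the BW05 substitution applied to $\frac{\pd^\ell f}{\pd (p\,x_i)}$, which is nonzero (with high probability over a large enough extension field of $\F$) exactly when that partial derivative is nonzero. A light pre-processing step — either extracting the degree-$d$ homogeneous part of $f$ by scalar-interpolation first, or augmenting the $R_s$-block with a single length-tracking coordinate — restricts attention to degree-$d$ monomials of $f$, as required by the prefix-search invariant.

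The main obstacle is engineering this coupled simulation so that one black-box call genuinely effects both the left partial derivative with respect to $p\,x_i$ and the BW05 PIT substitution on the result, without any spurious contribution from prefixes other than $p\,x_i$ or from monomials of the wrong length; the block-triangular structure and the zero entries in $A_s,C_s$ at non-matching positions are exactly what rules these out. Once this simulation is in place, the query count $nd$ is immediate, and the dimension bound follows by noting that $A_s$ contributes at most $d$ automaton states while $R_s$ contributes at most $D+1$ dimensions for the BW05 substitution, so the total dimension of $N_s$ can be kept to at most $(D+1)\cdot 2d$, matching the bound in the statement.
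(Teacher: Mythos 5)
Your proposal follows essentially the same route as the paper: a left-to-right prefix search in which each of the at most $nd$ candidate extensions is tested by a single black-box call on structured matrices that couple a prefix-matching shift automaton to the Bogdanov--Wee random-matrix substitution. In the paper's construction the coupling is packed into a single $(D+1)\times(D+1)$ upper-shift matrix per variable: the first $|p\,x_i|$ superdiagonal entries are $0/1$ indicators for the prefix letters and the remaining superdiagonal entries carry the variable itself. Keeping the shift going after the prefix is what makes the position index track the monomial length, so the $(1,d+1)$ entry collects exactly $\frac{\pd^\ell f_d}{\pd (p\,x_i)}$ (the derivative of the degree-$d$ homogeneous part), and substituting random $2d\times 2d$ matrices for the variables then gives one query of size $(D+1)2d$ per test.

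The one step of yours that does not work as stated is the degree restriction. Once your coupling block $C_s$ hands control to the BW05 block $R_s$, lengths are no longer tracked, so the cross block of $f(N_1,\ldots,N_n)$ accumulates the substitution applied to $\frac{\pd^\ell f}{\pd (p\,x_i)}$ of the \emph{whole} polynomial, and neither of your proposed fixes recovers the homogeneous part within the stated budget: a ``single length-tracking coordinate'' cannot count up to $d$ (you need a shift register with $\Theta(d)$ or $\Theta(D)$ states, which is precisely what the paper's continuing shift provides), while extracting $f_d$ by scalar interpolation costs $D+1$ extra black-box calls per test and so breaks the $nd$ query bound. Relatedly, your dimension accounting ($d$ automaton states plus $D+1$ dimensions for BW05) is additive where it should be multiplicative: the bound $(D+1)2d$ arises because each of the $D+1$ shift positions is inflated by a $2d\times 2d$ random block. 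If you replace your suffix block by the continuing shift with variable entries, as in the paper, the rest of your argument goes through.
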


\begin{proof}
Similar to the white-box case, we will do a prefix search for a
nonzero degree-$d$ monomial $m$. We explain the matrix-valued queries
we will make to the black-box for $f$ in two stages. In the first
stage the query matrices will have the noncommuting variables from $X$
as entries.  In the second stage, which we will actually use to query
$f$, we will substitute matrices (with scalar entries) for the
variables occurring in the query matrices.

In order to check if there is a non-zero degree $d$ monomial in $f$
with $x_i$ as first variable, we evaluate $f$ on $(D+1) \times (D+1)$
matrices $X_1,\ldots,X_n$ defined as follows.
\[ X_1 = \left( \begin{array}{ccccc}
0 & 1 & 0 & \dots & 0\\
\vdots & 0 & x_1 & \ddots  & \vdots \\
\vdots & \ddots & \ddots &  \ddots & 0\\
\vdots & \ddots & \ddots & 0 & x_1 \\
0 & \ldots & \ldots & \ldots & 0 \\ 
\end{array} \right) \]
and for $i>1$, 
\[ X_i = \left( \begin{array}{ccccc}
0 & 0 & \ldots & \dots & 0\\
\vdots & 0 & x_i & \ddots  & \vdots \\
\vdots & \ddots & \ddots &  \ddots & 0\\
\vdots & \ddots & \ddots & 0 & x_i \\
0 & \ldots & \ldots & \ldots & 0  
\end{array} \right). \]

Therefore, $f(X_1,\dots,X_n)$ is a $(D+1) \times (D+1)$ matrix with
entries in $\FX$. In particular, by our choice of the matrices above,
the $(1,d+1)^{th}$ entry of $f(X_1,\dots,X_n)$ will precisely be
$\frac{\pd^\ell f_d}{\pd x_1}$, where $f_d$ is the degree $d$
homogeneous part of $f$. To check if $\frac{\pd^\ell f_d}{\pd x_1}$ is
nonzero, we can substitute random $t \times t$ matrices for each
variable $x_i$ occurring in the matrices $X_i$ and evaluate $f$ to
obtain a $((D+1)t)\times((D+1)t)$ matrix. In this matrix
$(1,d+1)^{th}$ block of size $t \times t$ will be nonzero with high
probability for $t= 2d$ \cite{BW05}. In general, suppose $m'$ is a
degree $d$ monomial that is nonzero in $f$. We search for the first
variable $x_i$ such that $m'x_i$ is a prefix of some degree $d$
nonzero monomial by similarly creating black-box access to
$\frac{\pd^\ell f_d}{\pd m'x_i}$ and then substituting random $t
\times t$ matrices for the variables and evaluating $f$ to obtain a
$((D+1)t)\times((D+1)t)$ matrix. The $(1,d+1)^{th}$ block of size $t
\times t$ is nonzero with high probability \cite{BW05} if $m'x_i$ is a
prefix for some nonzero degree $d$ monomial. Continuing thus, we can
obtain a nonzero monomial of highest degree $d$ in $f$.
\end{proof}

We now describe an efficient algorithm that creates black box access
for left and right partial derivatives of $f$ w.r.t.\ monomials.  Let
$m\in X^*$ be a monomial. We recall that the left partial derivative
$\frac{\pd^\ell f}{\pd m}$ of $f$ w.r.t.\ $m$ is the polynomial
\[
\frac{\pd^\ell f}{\pd m} = \sum_{f(mm')\ne 0} f(mm')m'.
\]

Similarly, the right partial derivative $\frac{\pd^r f}{\pd m}$ of $f$
w.r.t.\ $m$ is the polynomial
\[
\frac{\pd^r f}{\pd m} = \sum_{f(m'm)\ne 0} f(m'm)m'.
\]

\begin{lemma}\label{bblemma2}
 Given a polynomial $f$ of degree $d$ with black-box access, there are
 efficient algorithms that give black-box access to the polynomials
 $\frac{\pd^\ell f}{\pd m_1}$, $\frac{\pd^r f}{\pd m_2}$ for any
 monomials $m_1,m_2\in X^*$. Furthermore, there is also an efficient
 algorithm giving black-box access to the polynomial
 $\frac{\pd^\ell}{\pd m_1}\left(\frac{\pd^r f}{\pd m_2}\right)$.
\end{lemma}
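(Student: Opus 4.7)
The plan is to adapt the matrix-substitution trick from Lemma~\ref{bblemma1}: given query matrices $M_1,\ldots,M_n$ of size $t\times t$ at which I wish to evaluate a partial derivative, I will build slightly larger block matrices $Y_1,\ldots,Y_n$ with scalar entries, invoke the black box for $f$ once on $(Y_1,\ldots,Y_n)$, and read off a designated $t\times t$ block of the result. The ``prefix walking'' pattern already used in Lemma~\ref{bblemma1} naturally realizes the operation of chopping off a fixed prefix from every monomial, and a symmetric ``suffix walking'' pattern chops off a fixed suffix.

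For $\frac{\pd^\ell f}{\pd m_1}$ with $m_1=x_{i_1}x_{i_2}\cdots x_{i_k}$, I would use $(k+1)\times(k+1)$ block matrices $Y_j$ with $t\times t$ blocks, setting $(Y_j)_{a,a+1}=I_t$ whenever $a\le k$ and $j=i_a$, and $(Y_j)_{k+1,k+1}=M_j$, with all other blocks zero. For any monomial $m=x_{j_1}\cdots x_{j_\ell}$ of $f$, the only path in the associated block transition graph from block $1$ to block $k+1$ using the letters $j_1,\ldots,j_\ell$ in order is the one that first walks $1\to 2\to\cdots\to k+1$ (forcing $j_1\cdots j_k=m_1$) and then takes self-loops at block $k+1$ (contributing $M_{j_{k+1}}\cdots M_{j_\ell}$). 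Summing over all monomials of $f$ gives $(f(Y_1,\ldots,Y_n))_{1,k+1}=\frac{\pd^\ell f}{\pd m_1}(M_1,\ldots,M_n)$, the desired black-box. The right partial derivative $\frac{\pd^r f}{\pd m_2}$ is handled by the mirror construction: place the evaluation self-loop at block $1$ via $(Y_j)_{1,1}=M_j$, walk $1\to 2\to\cdots\to k'+1$ along the letters of $m_2$, and extract the $(1,k'+1)$ block of $f(Y_1,\ldots,Y_n)$.

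For the composite $\frac{\pd^\ell}{\pd m_1}\!\left(\frac{\pd^r f}{\pd m_2}\right)$ I would merge the two constructions into a single $(k+k'+1)\times(k+k'+1)$ block matrix: blocks $1,\ldots,k+1$ carry out the $m_1$-prefix walk, block $k+1$ holds the evaluation self-loop $(Y_j)_{k+1,k+1}=M_j$, and blocks $k+1,\ldots,k+k'+1$ carry out the $m_2$-suffix walk. The $(1,k+k'+1)$ block of $f(Y_1,\ldots,Y_n)$ should then aggregate exactly the contributions of monomials of $f$ of the form $m_1\, m'\, m_2$, each weighted by $m'(M_1,\ldots,M_n)$, giving the advertised black-box.

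The only delicate point, and the main obstacle to check, is ruling out spurious paths in the composite construction: when the variable currently being processed equals $(m_2)_1$, the matrix $Y_{(m_2)_1}$ admits both a self-loop at block $k+1$ and a forward edge to block $k+2$, so a priori one monomial might contribute along several paths. However, a monotone path from block $1$ to block $k+k'+1$ must traverse each forward edge $(a,a+1)$ for $1\le a\le k+k'$ exactly once, and all self-loops are confined to block $k+1$; this forces the path to correspond to the unique decomposition in which $m_1$ is the leftmost $|m_1|$ letters and $m_2$ is the rightmost $|m_2|$ letters of the monomial, with no cross terms. The matrix dimension blows up by a factor of at most $|m_1|+|m_2|+1$, and each partial-derivative query costs a single call to the black box for $f$, so the resulting algorithms are efficient.
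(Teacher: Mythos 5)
Your construction is correct and takes essentially the same approach as the paper: substitute structured block matrices that act as a small automaton enforcing the prefix $m_1$ (resp.\ suffix $m_2$) and read the evaluated derivative off a designated corner block of $f$ applied to these matrices. The only difference is one of implementation: by placing the evaluation matrices $M_j$ on a self-loop you capture all homogeneous parts of $f$ in a single block entry, whereas the paper uses strictly upper-triangular shift matrices and therefore sums the $(1,j+1)$ entries over degrees (and, for the right derivative, forms a block-diagonal direct sum over degrees); your merged construction for $\frac{\pd^\ell}{\pd m_1}\bigl(\frac{\pd^r f}{\pd m_2}\bigr)$, together with the unique-surviving-path argument ruling out spurious contributions, is sound.
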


\begin{proof}
Given the polynomial $f$ and a monomial $m_1$, we first show how to
create black-box access to $\frac{\pd^\ell f}{\pd m_1}$.  Let the
monomial $m_1$ be $x_{i_1}\dots x_{i_k}$.  Define matrices
$X_1,\dots,X_n$ as follows:



\[ X_i = \left( \begin{array}{cccc|cccc}
0 & T_i[1] & 0 & \dots & 0 & \ldots& \ldots& 0\\
\vdots & 0 & T_i[2]  & \ddots & \ddots  & \vdots & \ldots&\vdots  \\
\vdots & \ddots & 0 &  T_i[k] & 0 & \ldots& \ldots& 0 \\
\hline
0 & \ldots & \ldots & 0 &  x_i & \dots & \ldots&  0\\
\vdots & \ldots & \ldots & \vdots & 0 & x_i  & 0  & 0 \\ 
\vdots & \ldots & \ldots & \vdots & \vdots& \ddots  & \ddots & \vdots \\
\vdots & \ldots & \ldots & \vdots & \vdots& \ddots&  \ddots & x_i\\
0 & \ldots & \ldots & 0 & 0& \dots  & \ldots& 0
\end{array} \right). \]

For $1\le r\le k$, the entry $T_i[r]=1$ if $x_{i_r}=x_i$ and
$T_i[r]=0$ otherwise.

The black-box access to $\frac{\pd^\ell f}{\pd m_1}$ on input matrices
$M_1,\dots,M_n$ of size $t \times t$ can be created as follows. Note
that in the $(D+1)\times (D+1)$ matrix $f(X_1,\dots,X_n)$, the
$(1,j+1)^{th}$ location contains $\frac{\pd^\ell f_j}{\pd m_1}$, where
$f_j$ is the degree $j^{th}$ homogeneous part of $f$. Now, suppose for
each variable $x_i$ in the matrix $X_i$ we substitute a $t\times t$
scalar-entry matrix $M_i$ and compute the resulting $f(X_1,\dots,X_n)$
which is now a $(D+1) \times (D+1)$ block matrix whose entries are
$t\times t$ scalar-entry matrices.  Then the block matrix located at
the $(1,j+1)^{st}$ entry for $j\in \{2,\dots,d+1\}$ is the evaluation
of $\frac{\pd^\ell f_j}{\pd m_1}$ on $M_1,\dots,M_n$. We output the
sum of these matrix entries over $2\le j\le d+1$ as the black-box
output. \\

Next, we show how to create black-box access to $\frac{\pd^r f}{\pd
  m_2}$. Let the monomial $m_2 = x_{i_1} x_{i_2} \dots x_{i_k}$.  We
will define matrices $X_i^{(j)}$ for $i\in [n], j \in \{k,k+1\dots,D
\}$ as follows:

\[ X^{(j)}_i = \left( \begin{array}{cccc|cccc}
0 & x_{i} & 0 & \dots & 0 & \ldots& \ldots& 0\\
\vdots &   \vdots  & \ddots & \ddots  & \vdots & \ldots& \ldots&\vdots  \\
\vdots & \ddots & 0 &  x_{i} & 0 & \ldots& \ldots& 0 \\
\hline
0 & \ldots & \ldots & 0 &  T_{i}[1] & \dots & \ldots&  0\\
\vdots & \ldots & \ldots & \vdots & 0 & T_{i}[2]  & 0  & 0 \\ 
\vdots & \ldots & \ldots & \vdots & \vdots& \ddots  & \ddots & \vdots \\
\vdots & \ldots & \ldots & \vdots & \vdots& \ddots&  \ddots & T_{i}[k]\\
0 & \ldots & \ldots & 0 & 0& \dots  & \ldots& 0
\end{array} \right). \]

The matrix $X_i^{(j)}$ is a $(j+1)\times (j+1)$ matrix where, in the
figure above the top left block is of dimension $(j-k)\times (j-k+1)$.
Here $T_{i}[r]=1$ if $x_{i_r}=x_i$ and $T_{i}[r]=0$ otherwise.

Finally, define the block diagonal matrix $X_i =
diag(X_i^{(k)},\dots,X_i^{(D)})$.

We now describe the black-box access to $\frac{\pd^r f}{\pd m_2}$.  Let
$x_i\leftarrow M_i$ be the input assignment of $t\times t$ matrices
$M_i$ to the variables $x_i$. This results in matrices $\hat{X_i},
1\le i\le n$, where $\hat{X_i}$ is obtained from $X_i$ by replacing
$x_i$ by $M_i$. 

The query $f(\hat{X_1},\hat{X_2},\ldots,\hat{X_n})$ to $f$ gives a
block diagonal matrix $diag(N_k,N_{k+1},\ldots,N_D)$. Here the matrix
$N_j$ is a $(j+1)\times (j+1)$ block matrix with entries that are
$t\times t$ matrices (over $\F$). The $(1,j+1)^{th}$ block in $N_j$ is
$\frac{\pd^r f_j}{\pd m_2}$ evaluated at $M_1,M_2,\ldots,M_n$.

Hence the sum of the $(1,j+1)^{th}$ blocks of the different $N_j, k\le
k\le D$ gives the desired black-box access to $\frac{\pd^r f_j}{\pd
 m_2}$.

\end{proof}

\section{Factorization of Multilinear and Homogeneous Polynomials}\label{three}

In this section we briefly discuss two interesting special cases of
the standard factorization problem for polynomials in $\FX$. Namely,
the factorization of multilinear polynomials and the factorization of
homogeneous polynomials. It turns out, as we show, that factorization
of multilinear polynomials coincides with their variable-disjoint
factorization. In the case of homogeneous polynomials, it turns out
that by renaming variables we can reduce the problem to
variable-disjoint factorization.

In summary, multilinear polynomials as well as homogeneous polynomials
have unique factorizations in $\FX$, and by the results of the
previous section these can be efficiently computed.

A polynomial $f\in\FX$ is \emph{multilinear} if in every nonzero
monomial of $f$ every variable in $X$ occurs at most once. We begin by
observing some properties of multilinear polynomials.  Let $\var(f)$
denote the set of all indeterminates from $X$ which appear in some
nonzero monomial of $f$.
 
It turns out that factorization and variable-disjoint factorization of
multilinear polynomials coincide.

\begin{lemma}\label{lem-inter}
Let $f\in\FX$ be a multilinear polynomial and $f=gh$ be any nontrivial
factorization of $f$. Then, $Var(g) \cap Var(h) = \emptyset$.
\end{lemma}

\begin{proof}
Suppose $x_i \in Var(g) \cap Var(h)$ for some $x_i\in X$. Let $m_1$ be
a monomial in $g$ of maximal degree which also has the indeterminate
$x_i$ occurring in it. Similarly, let monomial $m_2$ be of maximal
degree in $h$ with $x_i$ occurring in it. The product monomial $m_1
m_2$ is not multilinear and it cannot be nonzero in $f$. This monomial
must therefore be cancelled in the product $gh$, which means there are
nonzero monomials $m_1'$ of $g$ and $m_2'$ of $h$ such that
$m_1m_2=m_1'm_2'$. Since $\deg(m_1')\le \deg(m_1)$ and $\deg(m_2')\le
\deg(m_2)$ the only possibility is that $m_1=m_1'$ and $m_2=m_2'$
which means the product monomial $m_1m_2$ has a nonzero coefficient in
$f$, contradicting the multilinearity of $f$. This completes the
proof.
\end{proof}

Thus, by Theorem~\ref{thm-disjfact} multilinear polynomials in $\FX$
have unique factorization. Furthermore, the algorithms described in
Section~\ref{disjfact-pit} can be applied to efficiently factorize
multilinear polynomials.

We now briefly consider factorization of homogeneous polynomials in
$\FX$. 

\begin{definition}
A polynomial $f\in\FX$ is said to be \emph{homogeneous} of degree $d$
if every nonzero monomial of $f$ is of degree $d$.
\end{definition}

Homogeneous polynomials do have the unique factorization
property. This is attributed to J.H.~Davenport in \cite{Caruso}.
However, we argue this by reducing the problem to variable-disjoint
factorization.

Given a degree-$d$ homogeneous polynomial $f\in\FX$, we apply the
following simple transformation to $f$: For each variable $x_i\in X$
we introduce $d$ variables $x_{i1},x_{i2},\ldots,x_{id}$. For each
monomial $m\in mon(f)$, we replace the occurrence of variable $x_i$ in
the $j^{th}$ position of $m$ by variable $x_{ij}$. The new polynomial
$f'$ is in $\F\angle{\{x_{ij}\}}$. The crucial property of homogeneous
polynomials we use is that for any factorization $f=gh$ both $g$ and
$h$ must be homogeneous.

\begin{lemma}\label{l:uniq1}
Let $f\in\FX$ be a homogeneous degree $d$ polynomial and $f'$ be the
polynomial in $\F\angle{\{x_{ij}\}}$ obtained as above. Then
\begin{itemize}
\item The polynomial $f'$ is variable-disjoint irreducible iff
$f$ is irreducible.
\item If $f'=g_1'g_2'\dots g_t'$ is the variable-disjoint
  factorization of $f'$, where each $g_k'$ is variable-disjoint
  irreducible then, correspondingly $f=g_1g_2\dots g_t$ is a
  factorization of $f$ into irreducibles $g_k$, where $g_k$ is
  obtained from $g_k'$ by replacing each variable $x_{ij}$ in $g_k'$
  by $x_i$.
\end{itemize}
\end{lemma}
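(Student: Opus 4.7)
The plan is to set up a correspondence between nontrivial factorizations of $f$ in $\FX$ and nontrivial variable-disjoint factorizations of $f'$, from which both bullets will follow. The key structural fact about $f'$ is that every nonzero monomial of $f'$ has the form $x_{i_1,1}\, x_{i_2,2}\cdots x_{i_d,d}$: the second index of the variable in position $j$ is exactly $j$, because $f$ is homogeneous of degree $d$ and the transformation labels position $j$ with second index $j$.

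For the forward direction, suppose $f=gh$ is a nontrivial factorization. First I would argue that $g$ and $h$ must themselves be homogeneous. Writing $g=\sum_i g_i$, $h=\sum_j h_j$ as homogeneous decompositions and using that $\FX$ is an integral domain, the top-degree component of $gh$ equals $g_a h_b$ (where $a,b$ are the top degrees of $g,h$) and the bottom-degree component equals $g_{a'}h_{b'}$; since $gh=f$ is homogeneous of degree $d$ we must have $a+b=a'+b'=d$, and $a\ge a'$, $b\ge b'$ then force $a=a'$ and $b=b'$. So $g,h$ are homogeneous of degrees $d_1,d_2$ with $d_1+d_2=d$. I would then define $g'$ by replacing $x_i$ at position $j$ in each monomial of $g$ by $x_{ij}$ for $1\le j\le d_1$, and $h'$ by replacing $x_i$ at position $j$ in each monomial of $h$ by $x_{i,d_1+j}$. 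Then $f'=g'h'$, and $Var(g')\subseteq\{x_{ij}:j\le d_1\}$ while $Var(h')\subseteq\{x_{ij}:j>d_1\}$, a variable-disjoint factorization of $f'$.

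For the reverse direction, suppose $f'=p'q'$ is a nontrivial variable-disjoint factorization. The same integral-domain argument applied to $f'$ makes $p',q'$ homogeneous of degrees $k$ and $d-k$. Each monomial of $f'$, having the form $x_{i_1,1}\cdots x_{i_d,d}$, factors uniquely as $m_1 m_2$ with $|m_1|=k$, $m_1\in mon(p')$, $m_2\in mon(q')$, forcing $m_1=x_{i_1,1}\cdots x_{i_k,k}$ and $m_2=x_{i_{k+1},k+1}\cdots x_{i_d,d}$. Thus the second indices appearing in $Var(p')$ lie in $\{1,\ldots,k\}$ and those in $Var(q')$ lie in $\{k+1,\ldots,d\}$. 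Applying the collapse ring homomorphism $\phi:x_{ij}\mapsto x_i$ to $p',q'$ yields polynomials $p,q\in\FX$ with $\phi(p')\phi(q')=\phi(f')=f$ and both nontrivial, giving a nontrivial factorization $f=pq$. Combined with the forward direction, this establishes the first bullet.

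For the second bullet I would iterate: from the variable-disjoint factorization $f'=g_1' g_2'\cdots g_t'$ into variable-disjoint irreducibles, the collapse homomorphism gives $f=g_1 g_2\cdots g_t$. By the reverse-direction analysis, each $g_k'$ uses only second indices from some interval $\{j_{k-1}+1,\ldots,j_k\}$; relabeling these bijectively to $\{1,\ldots,\deg g_k\}$ (which preserves variable-disjoint irreducibility, being just a renaming of variables) turns $g_k'$ into the transform of the homogeneous polynomial $g_k$, and the first bullet then implies $g_k$ is irreducible. The main technical obstacle is the pair of structural claims driving the reverse direction: that $p',q'$ must be homogeneous and that their variable sets partition the second indices into an initial and a complementary final segment. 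Both rely on combining the domain property of $\FX$ with the rigid positional labeling of monomials in $f'$.
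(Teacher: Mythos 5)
Your proposal is correct and follows essentially the same route as the paper: lift a factorization of the homogeneous $f$ (using that its factors are necessarily homogeneous) to a variable-disjoint factorization of $f'$ via the positional relabeling, and collapse a variable-disjoint factorization of $f'$ back to $f$ via $x_{ij}\mapsto x_i$, with the interval structure of the second indices giving irreducibility of each collapsed factor. Your writeup is in fact a bit more explicit than the paper's sketch (e.g.\ you prove homogeneity of the factors via the integral-domain argument, which the paper merely asserts), but there is no substantive difference in approach.
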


\begin{proof}
The first part follows because if $f$ is reducible and $f=gh$ then
$f'=g'h'$, where $g'$ is obtained from $g$ by replacing the variables
$x_i$ by $x_{ij}$, and $h'$ is obtained from $h$ by replacing the
variables $x_i$ by $x_{i,j+s}$, where $s=\deg g$.

For the second part, consider the product $g_1g_2\dots g_t$. As all
the factors $g_k$ are homogeneous, it follows that each $g_k$ is
irreducible for otherwise $g_k$ is not variable disjoint irreducible.
Furthermore, any monomial $m$ in $g_1g_2\dots g_t$ can be uniquely
expressed as $m=m_1m_2\dots m_t$, where $m_k\in mon(g_k)$ for each
$k$. Thus, for each $i$ and $j$, replacing the $j^{th}$ occurrence of
$x_i$ by $x_{ij}$ in the product $g_1g_2\dots g_t$ will give us $f'$
again. Hence $g_1g_2\dots g_t$ is $f$.
\end{proof}

It follows easily that factorization of homogeneous polynomials is
reducible to variable-disjoint factorization and we can solve it
efficiently using Theorems~\ref{disjfactalgo} and
\ref{disjfact-sparse}, depending on how the polynomial is input. We
summarize this formally.

\begin{theorem}\label{uf-homogeneous}
Homogeneous polynomials $f\in\FX$ have unique factorizations into
irreducible polynomials. Moreover, this factorization can be
efficiently computed:
\begin{itemize}
\item Computing the factorization of a homogeneous polynomial $f$
  given by an arithmetic circuit of polynomial degree is
  polynomial-time reducible to computing the variable-disjoint
  factorization of a polynomial given by an arithmetic circuit.
\item Factorization of $f$ given by an ABP is constant-depth reducible
  to variable-disjoint factorization of polynomials given by ABPs.
\item Factorization of $f$ given in sparse representation is
  constant-depth reducible to variable-disjoint factorization of
  polynomials given by sparse representation.
\end{itemize}
\end{theorem}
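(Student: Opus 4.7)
The plan is to combine Lemma~\ref{l:uniq1} with Theorems~\ref{disjfactalgo} and~\ref{disjfact-sparse}. Uniqueness is immediate: by Theorem~\ref{thm-disjfact} the variable-disjoint factorization of $f'$ into variable-disjoint irreducibles is unique up to scalars, and the bijection between factorizations of $f$ into irreducibles and variable-disjoint factorizations of $f'$ guaranteed by Lemma~\ref{l:uniq1} transports this uniqueness back to $f$.

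For the algorithmic statements, the task in each of the three models reduces to efficiently computing a representation of $f'$ from the given representation of $f$ in the required sense; the corresponding variable-disjoint factorization algorithm from the previous section then yields the $g_k'$ in the same model, and we recover $g_k$ by undoing the substitution $x_{ij}\mapsto x_i$, which is a trivial relabeling of leaves (for circuits), edge labels (for ABPs), or monomials (for sparse representation).

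For the arithmetic circuit case, first apply the standard homogenization (Strassen's construction) to the given polynomial-degree circuit for $f$ to obtain, at a polynomial cost, a homogeneous circuit in which every gate computes a homogeneous polynomial, and where the output gate of degree $d$ computes $f$. Then perform a position-indexed duplication: for every gate $v$ of the homogenized circuit, say of degree $d_v$, and every starting position $j\in\{0,\ldots,d-d_v\}$, introduce a copy $v_j$ under the rules $(v_1\cdot v_2)_j=(v_1)_j\cdot (v_2)_{j+d_{v_1}}$ at product gates, $(v_1+v_2)_j=(v_1)_j+(v_2)_j$ at sum gates, and replacing each leaf $x_i$ at starting position $j$ by the fresh variable $x_{i,j+1}$. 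The gate $(v_{\mathrm{out}})_0$ then computes $f'$ by construction. For the ABP case we may assume the ABP is layered with $d$ layers of linear-form edges; simply replacing every occurrence of $x_i$ on an edge from layer $j$ to layer $j+1$ by $x_{i,j+1}$ yields an ABP of the same size computing $f'$. This is a purely local relabeling, hence a constant-depth reduction. For the sparse case the task is even simpler: each nonzero monomial $x_{i_1}x_{i_2}\cdots x_{i_d}$ of $f$ is replaced in parallel by $x_{i_1,1}x_{i_2,2}\cdots x_{i_d,d}$ carrying the same coefficient, which is again constant-depth.

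The main obstacle is making sure the circuit case composes to polynomial size. Strassen's homogenization blows the circuit up by a factor of $O(d^2)$ and makes every gate homogeneous, and the position-indexed duplication multiplies the size by at most $d+1$, so the circuit for $f'$ has size $\poly(s,d)$ in $O(nd)$ variables, and Theorem~\ref{disjfactalgo} is directly applicable. Everything else, including the final rename of the $x_{ij}$ back to $x_i$ inside the output circuits/ABPs/sparse lists of $g_k'$, is routine.
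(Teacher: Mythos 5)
Your proposal is correct and follows essentially the same route as the paper: uniqueness and the algorithms are obtained by passing to the position-indexed polynomial $f'$ of Lemma~\ref{l:uniq1} and invoking Theorem~\ref{thm-disjfact} together with the variable-disjoint factorization algorithms of Section~\ref{disjfact-pit}, then renaming $x_{ij}\mapsto x_i$ in the factors. The only difference is that you spell out the construction of a representation of $f'$ in each model (homogenization plus position-indexed gate duplication for circuits, layer-wise relabeling for ABPs, monomial rewriting for sparse input), details the paper leaves implicit with ``it follows easily,'' and these are carried out correctly.
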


\section{A Polynomial Decomposition Problem}\label{four}
  

Given a degree $d$ homogeneous noncommutative polynomial $f \in \FX$,
a number $k$ in unary as input we consider the following decomposition
problem, denoted by $\SOP$ (for sum of products decomposition): 

Does $f$ admit a decomposition of the form
\[
f = g_1 h_1 + \dots + g_k h_k? 
\]
where each $g_i\in\FX$ is a homogeneous polynomial of degree $d_1$ and
each $h_i\in\FX $ is a homogeneous polynomial of degree $d_2$. Notice
that this problem is a generalization of homogeneous polynomial
factorization. Indeed, homogeneous factorization is simply the case
when $k=1$.

\begin{remark}
As mentioned in \cite{Arnab14}, it is interesting to note that for
\emph{commutative polynomials} the complexity of \textsc{SOP} is open
even in the case $k=2$.  However, when $f$ is of constant degree then
it can be solved efficiently by applying a very general algorithm
\cite{Arnab14} based on a regularity lemma for polynomials.
\end{remark}

When the input polynomial $f$ is given by an arithmetic circuit, we
show that \textsc{SOP} is in $\MA\cap\coNP$. On the other hand, when
$f$ is given by an algebraic branching program then $\textsc{SOP}$ can
be solved in deterministic polynomial time by some well-known
techniques (we can even compute ABPs for the $g_i$ and $h_i$ for the
minimum $k$).

\begin{theorem}\label{2sop}
Suppose a degree $d$ homogeneous noncommutative polynomial $f\in\FX$,
and positive integer $k$ encoded in unary are the input to $\SOP$:
\begin{itemize}
\item[(a)] If $f$ is given by a polynomial degree arithmetic circuit
  then $\SOP$ is in $\MA \cap \coNP$.
\item[(b)] If $f$ is given by an algebraic branching program then
  $\SOP$ is in deterministic polynomial time (even in randomized
  $\NC^2$).
\item[(c)] If $f$ is given in the sparse representation then
  \textsc{SOP} is equivalent to the problem of checking if the rank of
  a given matrix is at most $k$. In particular, if $\F$ is the field
  of rationals, $\SOP$ is complete for the complexity class
  $\mathrm{C_{=}L}$.\footnote{The logspace counting class
    $\mathrm{C_{=}L}$ captures the complexity of matrix rank over
    rationals~\cite{ABO99}.}
\end{itemize}
\end{theorem}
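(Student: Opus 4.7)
The plan is to unify the three parts via the following linear-algebraic reformulation. For a homogeneous $f$ of degree $d = d_1 + d_2$, define the matrix $M_f$ indexed by degree-$d_1$ monomials $u$ (rows) and degree-$d_2$ monomials $v$ (columns), with entries $M_f[u,v] = f(uv)$. A decomposition $f = \sum_{i=1}^k g_i h_i$ with $g_i,h_i$ homogeneous of degrees $d_1,d_2$ is exactly a rank-$k$ outer-product factorization $M_f = \sum_{i=1}^k \mathbf{g}_i \mathbf{h}_i^T$ of the coefficient vectors of the $g_i, h_i$. Hence $\SOP$ is precisely the decision problem $\mathrm{rank}(M_f)\le k$, and all three parts are obtained through this lens.

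Part (c) is the most direct. When $f$ is given in sparse representation, $M_f$ has nonzero entries only at $(u,v)$ with $uv$ a nonzero monomial of $f$, and a polynomial-size submatrix containing all nonzero entries can be written down in logspace; its rank equals $\mathrm{rank}(M_f)$. This gives a logspace equivalence between $\SOP$ and the matrix rank-at-most-$k$ decision problem (the reverse reduction encodes any matrix entry $N_{ij}$ as the coefficient of $u_i v_j$ in $f$, for any fixed families of distinct $u_i, v_j$), so over $\mathbb{Q}$ we get $\mathrm{C_{=}L}$-completeness via \cite{ABO99}. For part (b), I would invoke Nisan's characterization: for a homogeneous noncommutative $f$, the minimum-width ABP computing $f$ has width at layer $d_1$ equal to $\mathrm{rank}(M_f)$. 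Starting from the given ABP, minimize it by standard linear algebra (in deterministic polynomial time, or randomized $\NC^2$ via parallel matrix rank) and read off $\mathrm{rank}(M_f)$ as the width at layer $d_1$; splitting the minimized ABP at that layer directly yields ABPs for the $g_i$ (left sub-ABPs to the $k$ intermediate nodes) and $h_i$ (right sub-ABPs from those nodes).

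For part (a), the $\coNP$ upper bound is immediate: to certify $f \notin \SOP$, Merlin sends $k+1$ monomials $u_1,\dots,u_{k+1}$ of degree $d_1$ and $v_1,\dots,v_{k+1}$ of degree $d_2$; Arthur computes each entry $f(u_i v_j)$ by the coefficient-extraction algorithm of \cite{AMS09} and checks $\det[f(u_i v_j)] \ne 0$. For the $\MA$ direction, Merlin sends $r \le k$ monomials $u_1,\dots,u_r$ and $v_1,\dots,v_r$ such that $N = [f(u_i v_j)]$ is nonsingular; Arthur constructs circuits $h_i = \frac{\pd^\ell f}{\pd u_i}$ and $g_i = \sum_j (N^{-1})_{ji} \frac{\pd^r f}{\pd v_j}$ from the circuit for $f$ and invokes the randomized noncommutative PIT of \cite{BW05} to verify $f = \sum_{i=1}^r g_i h_i$.

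The main obstacle is completeness of this $\MA$ protocol: an abstract rank-$r$ decomposition $M_f = \sum \mathbf{g}_i \mathbf{h}_i^T$ gives no a priori bound on the circuit complexity of the $g_i, h_i$. The resolution is the explicit formula above, derived by noting that (i) row $u$ of $M_f$ is the coefficient vector of $\frac{\pd^\ell f}{\pd u}$, which always admits a small circuit, and (ii) expressing an arbitrary row in the basis $\{R_{u_1},\dots,R_{u_r}\}$ amounts to multiplication by $N^{-1}$, so the coordinates $g_i(u)$ are $\F$-linear combinations of the scalars $f(u v_j)$, i.e., of the right partial derivatives $\frac{\pd^r f}{\pd v_j}$ evaluated at $u$. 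This is what turns the existence statement $\mathrm{rank}(M_f)\le k$ into a poly-size $\MA$ witness, after which the remaining verification is a single PIT call.
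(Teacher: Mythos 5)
Your proposal is correct and, for parts (a) and (c), follows essentially the same route as the paper: the same partial-derivative (coefficient) matrix $A_f$, the same rank-$\le k$ characterization of $\SOP$, the same $\coNP$ certificate (a nonsingular $(k+1)\times(k+1)$ minor checked via coefficient extraction \cite{AMS09}), and the same underlying reason why an $\MA$ witness exists, namely that a rank-$r$ skeleton decomposition $A_f=CN^{-1}R$ expresses the factors as partial derivatives of $f$ (resp.\ $N^{-1}$-combinations of partial derivatives), hence as small circuits --- this is exactly the content of the paper's Claim on circuit-sized $g_i,h_i$. Your $\MA$ protocol is a mild sharpening: the paper has Merlin send the circuits for $g_i,h_i$ outright, while you have Merlin send only the index monomials $u_i,v_j$ and let Arthur synthesize the circuits from the explicit formula $g_i=\sum_j (N^{-1})_{ji}\frac{\pd^r f}{\pd v_j}$, $h_i=\frac{\pd^\ell f}{\pd u_i}$; both work, and yours makes soundness slightly more transparent since Arthur builds the candidate decomposition himself. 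The genuine divergence is part (b): the paper obtains the minimum $k$ and the ABPs for $g_i,h_i$ by running the Beimel et al.\ multiplicity-automaton learning algorithm \cite{beim}, simulating equivalence queries with the Raz--Shpilka identity test \cite{RS05} on $P-P'$, whereas you invoke Nisan's characterization (minimal width at layer $d_1$ equals $\mathrm{rank}(A_f)$) and minimize the given ABP directly by linear algebra, then split it at layer $d_1$. Your route is more self-contained and makes the randomized $\NC^2$ claim more visible (parallel rank computations), while the paper's buys an off-the-shelf minimization with no further argument; the one point you should make explicit is that the linear algebra must be performed layer-by-layer on data derived from the given ABP (bases of the spaces spanned by the layer-$d_1$ prefix/suffix polynomials, à la Raz--Shpilka), since $A_f$ itself is exponentially large and its rank cannot be computed from it directly.
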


We first focus on proving part (a) of the theorem. If $(f,k)$ is a
``yes'' instance to \textsc{SOP}, then we claim that there exist small
arithmetic circuits for the polynomials $g_i,h_i$, $i \in [k]$.

We define the \emph{partial derivative matrix} $A_f$ for the
polynomial $f$ as follows. The rows of $A_f$ are indexed by degree
$d_1$ monomials and the columns of $A_f$ by degree $d_2$ monomials
(over variables in $X$). For the row labeled $m$ and column labeled
$m'$, the entry $A_{m,m'}$ is defined as
\[
A_{m,m'} = f(mm').
\]

The key to analyzing the decomposition of $f$ is the rank of the
matrix $A_f$.

\begin{claim}\label{2sopclm}
Let $f\in\FX$ be a homogeneous degree $d$ polynomial.

\begin{itemize}
\item[(a)] Then $f$ can be decomposed as $f = g_1 h_1 + \dots +g_k
  h_k$ for homogeneous degree $d_1$ polynomials $g_i$ and homogeneous
  degree $d_2$ polynomials $h_i$ if and only if the rank of $A_f$ is
  bounded by $k$.
\item[(b)] Furthermore, if $f$ is computed by a noncommutative
  arithmetic circuit $C$ then if the rank of $A_f$ is bounded by $k$
  there exist polynomials $g_i,h_i\in\FX$, $i\in [k]$, such that
  $f=g_1 h_1 + \dots +g_k h_k$, where $g_i$ and $h_i$ have
  noncommutative arithmetic circuits of size $poly(|C|,n,k)$
  satisfying the above conditions.
\end{itemize}
\end{claim}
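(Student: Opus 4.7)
The plan is to prove both parts by relating rank-$k$ decompositions of $A_f$ to sum-of-products expressions for $f$. For part (a) I would use the following key observation: if $g,h \in \FX$ are homogeneous of degrees $d_1$ and $d_2$, then for any monomial $\mu$ of degree $d = d_1 + d_2$, the coefficient $(gh)(\mu)$ equals $g(m)h(m')$, where $\mu = mm'$ is the \emph{unique} splitting of $\mu$ into a degree-$d_1$ prefix and a degree-$d_2$ suffix. Consequently, any decomposition $f = \sum_{i=1}^k g_i h_i$ gives $A_{m,m'} = \sum_i g_i(m)h_i(m')$, i.e.\ $A_f = \sum_{i=1}^k \vec{g_i}\,\vec{h_i}^{\,T}$, which has rank at most $k$. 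Conversely, from any rank-$k$ dyadic decomposition $A_f = \sum_{i=1}^k u_i v_i^{T}$ I would read off polynomials $g_i, h_i$ whose coefficient vectors are $u_i, v_i$; since $f$ is homogeneous of degree $d$, matching coefficients on $A_f$ immediately yields $f = \sum_i g_i h_i$.

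For part (b), suppose $\mathrm{rank}(A_f) \le k$ and $f$ is computed by a circuit $C$. I would first pick $k$ degree-$d_2$ monomials $m_1', \dots, m_k'$ whose columns span the column space of $A_f$, together with $k$ degree-$d_1$ monomials $m_1, \dots, m_k$ whose rows span the row space, so that the $k \times k$ minor $B = \bigl(f(m_r m_i')\bigr)_{r,i}$ is invertible. The right partial derivatives $\frac{\pd^r f}{\pd m_i'}$ are computable by noncommutative circuits of size $\poly(|C|)$, and their degree-$d_1$ homogeneous parts $g_i$ have coefficient vectors equal to the chosen columns of $A_f$. In particular, the $\vec{g_i}$ form a basis for the column space.

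To construct the $h_i$, I would expand every column of $A_f$ in this basis: $A_{m,m'} = \sum_i \lambda_{m',i}\, A_{m, m_i'}$. Restricting to the distinguished rows and inverting $B$ gives
\[
(\lambda_{m',1},\dots,\lambda_{m',k})^{T} \;=\; B^{-1}\Bigl(\tfrac{\pd^\ell f}{\pd m_1}(m'),\dots,\tfrac{\pd^\ell f}{\pd m_k}(m')\Bigr)^{T}.
\]
Hence the polynomial $h_i := \sum_{r=1}^{k}(B^{-1})_{i,r}\,\bigl(\tfrac{\pd^\ell f}{\pd m_r}\bigr)_{\deg d_2}$ has coefficient $\lambda_{m',i}$ at each degree-$d_2$ monomial $m'$; this forces $A_f = \sum_i \vec{g_i}\,\vec{h_i}^{\,T}$, and therefore $f = \sum_i g_i h_i$ by part (a). Each $h_i$ is a fixed scalar combination of $k$ left partial derivatives of $f$, so it has a circuit of size $\poly(|C|, n, k)$. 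The main technical input I would invoke (and arguably the only non-routine ingredient) is the fact that left and right partial derivatives with respect to a monomial of a circuit-computed noncommutative polynomial again admit circuits of polynomially bounded size; this is the standard white-box analogue of the matrix-simulation construction used in Lemma~\ref{bblemma2}.
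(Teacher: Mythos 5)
Your proposal is correct and follows essentially the same route as the paper: part (a) via the identification $A_f=\sum_i \bar{g_i}\bar{h_i}^T$ using the unique prefix--suffix split of each degree-$d$ monomial, and part (b) by taking the $g_i$ to be (degree-$d_1$ parts of) right partial derivatives at the selected column monomials and the $h_i$ to be $K^{-1}$-combinations of left partial derivatives at the selected row monomials, both of which have small circuits. The only cosmetic difference is that you obtain this rank-one decomposition by expanding columns in a basis and inverting the minor directly, whereas the paper derives the same $g_i,h_i$ via an explicit row/column-reduction factorization $A_f=U\,\mathrm{diag}(I_k,O)\,V$.
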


\begin{proofof}{Claim~\ref{2sopclm}}

For a homogeneous degree $d_1$ polynomial $g$ let $\bar{g}$ denote its
coefficient column vector whose rows are indexed by degree $d_1$
monomials exactly as the rows of $A_f$.  Similarly, for a homogeneous
degree $d_2$ polynomial $h$ there is a coefficient row vector
$\bar{[h}$ with columns indexed by degree $d_2$ monomials (as the
 columns of $A_f$).

Observe that if $f$ can be decomposed as the sum of products $f=g_1h_1
+ \dots +g_k h_k$, where each $g_i$ is degree $d_1$ homogeneous and
each $h_i$ is degree $d_2$ homogeneous then the matrix $A_f$ can be
decomposed into a sum of $k$ rank-one matrices:

\[
A_f = \bar{g_1} \bar{h_1}^T + \dots + \bar{g_k} \bar{h_k}^T.
\]

It follows that the rank of $A_f$ is bounded by $k$. Conversely, if
the rank of $A_f$ is $k$ then $A_f$ can be written as the sum of $k$
rank $1$ matrices. Since each rank one matrix is of the form
$\bar{g}\bar{h}^T$, we obtain an expression as above which implies the
decomposition of $f$ as $g_1h_1+\dots + g_kh_k$.

Now, if the rank of $A_f$ is $k$ then there are degree $d_1$ monomials
$m_1,\dots,m_k$ and degree $d_2$ monomials $m_1',\dots,m_k'$ such that
the $k \times k$ minor of $A_f$ corresponding to these rows and
columns is an invertible matrix $K$. W.l.o.g, we can write the $p
\times q$ matrix $A_f$ as
\[
\left( \begin{array}{ccc}
K & \Delta_2 \\
\Delta_1 & J \end{array} \right)
\]
for suitable matrices $\Delta_1,\Delta_2,J$. Moreover, since $A_f$ is
rank $k$, we can row-reduce to obtain
\[
\left( \begin{array}{ccc}
I & O \\
-\Delta_1 K^{-1} & I \end{array} \right)
\left( \begin{array}{ccc}
K & \Delta_2 \\
\Delta_1 & J \end{array} \right) = 
\left( \begin{array}{ccc}
K & \Delta_2 \\
O & O \end{array} \right)
\]
and column reduce to obtain 
\[
\left( \begin{array}{ccc}
I & O \\
-\Delta_1 K^{-1} & I \end{array} \right)
\left( \begin{array}{ccc}
K & \Delta_2 \\
\Delta_1 & J \end{array} \right) 
\left( \begin{array}{ccc}
I & -K^{-1}\Delta_2 \\
O & I \end{array} \right)
= 
\left( \begin{array}{ccc}
K & O \\
O & O \end{array} \right)
\]
It is easy to verify that this yields the following factorization for
$A_f = U I_k V$
\[
A_f = \left( \begin{array}{ccc} K & O \\ \Delta_1 & I \end{array}
\right) \left( \begin{array}{ccc} I & O \\ O & O \end{array} \right)
\left( \begin{array}{ccc} I & K^{-1}\Delta_2 \\ O & I \end{array}
\right)
\]
Since we can write $I_k$ as the sum $e_1e_1'^T + \dots +e_k e_k'^T$
for standard basis vectors of suitable dimensions, we can express
$A_f$ as the sum of $k$ rank-one matrices $(Ue_i)(e_i'^TV)$.  We
observe that the column vector $Ue_i$ is the $i^{th}$ column of matrix
$U$, which is identical to the $i^{th}$ column of matrix $A_f$.
Therefore, the polynomial represented by this vector corresponds to
the partial derivative of $f$ w.r.t the monomial $m_i$, which can
computed by a circuit of desired size.  Similarly, any row vector
$e_i'^T V$ is the $i^{th}$ row of matrix $V$, which is identical to
the $i^{th}$ row of matrix $A_f$, scaled by the transformation
$K^{-1}$. Moreover, the entries of $K^{-1}$ are efficiently
computable, and have size bounded by polynomial in the input size for
the fields of our interest.  Therefore, the polynomial represented by
this vector corresponds to a linear combination of the partial
derivatives of $f$ w.r.t the monomials $m_1',\dots,m_k'$, which can
computed by a circuit of desired size. Therefore, there exist
polynomials $g_i,h_i$ for $i \in [k]$ which satisfy the conditions
of the second part of the claim.
\end{proofof}

\begin{proofof}{Theorem \ref{2sop}}

\vspace{2mm}

\noindent\textbf{Part (a).} We first show that $\SOP$ is in
$\MA$. Given as input polynomial $f\in\FX$ by an arithmetic circuit,
the \textrm{MA} protocol works as follows.  Merlin sends the
description of arithmetic circuits for the polynomials $g_i, h_i, 1\le
i\le k$. By the second part of Claim \ref{2sopclm}, this message is
polynomial sized as the polynomials $g_i$ and $g_i$ have circuits of
size $poly(|C|,n,k)$. Arthur verifies that
$f=g_1h_1+g_2h_2+\dots+g_kh_k$ using a randomized noncommutative PIT
algorithm with suitable success probability. This establishes the
\textrm{MA} upper bound.

Next, we show that the complement problem $\overline{\SOP}$ is in NP.
I.e.\ given as input a polynomial $f\in\FX$ such that $f$ is not in
$\SOP$, we show that there is a $poly(|C|,n,k)$-sized proof,
verifiable in polynomial time, that $f$ cannot be written as $f =
g_1h_1 + \dots + g_k h_k$. It suffices to exhibit a short proof of the
fact that the rank of matrix $A_f$ is at least $k+1$. This can be done
by listing $k+1$ degree $d_1$ monomials $m_1,\dots,m_{k+1}$ and degree
$d_2$ monomials $m_1',\dots,m_{k+1}'$ such that the $(k+1)\times
(k+1)$ minor of $A_f$ indexed by these monomials has full rank. This
condition can be checked in polynomial time by verifying the
determinant of the minor to be non-zero.\\

\noindent\textbf{Part (b).} When the polynomial is given as an ABP
$P$, we sketch the simple polynomial-time algorithm for $\SOP$.

Our goal is to compute a decomposition 

\begin{equation}\label{decomp-eq}
f = g_1h_1 + \dots +g_kh_k
\end{equation}

for minimum $k$, where $\deg(g_i)=d_1$ and $\deg(h_i)=d_2$. In order
to compute this decomposition, we can suitably adapt the multiplicity
automaton learning algorithm of Beimel et al \cite{beim}. We can
consider the input homogeneous degree-$d$ polynomial $f\in\FX$ as a
function $f:X^d\rightarrow \F$, where $f(m)$ is the coefficient of
$m$ in the polynomial $f$. The learning algorithm works in Angluin's
model. More precisely, when given black-box access to the function
$f(m)$ for monomial queries $m$, it uses equivalence queries with
counterexamples and learns the minimum size ABP computing $f$ in
polynomial time. Given a hypothesis ABP $P'$ for $f$, we can simulate
the equivalence queries and finding a counterexample by using the
Raz-Shpilka PIT algorithm \cite{RS05} on $P-P'$. The minimized ABP that
is finally output by the learning algorithm will have the optimal
number of nodes at each layer. In particular, layer $d_1$ will have
the minimum number of nodes $k$ which will give the decomposition in
Equation~\ref{decomp-eq}. Furthermore, the ABPs for the polynomials
$g_i$ and $h_i$ can also be immediately obtained from the minimum size
ABP.\\

\noindent\textbf{Part (c).} When the polynomial $f\in\FX$ is given in
sparse representation, we can explicitly write down the partial
derivative matrix $A_f$ and check its rank. Moreover, we can even
compute the decomposition by computing a rank-one decomposition of the
partial derivative matrix.  The equivalence arises from the fact that
the rank of a matrix $A$ is equal to the minimum number of summands in
the decomposition of the noncommutative polynomial
$\displaystyle\sum_{i,j\in[n]} a_{ij}x_ix_j$ as a sum of products of
homogeneous linear forms.

\end{proofof}

\subsubsection*{An $\NP$-hard decomposition problem}

We now briefly discuss a generalization of $\SOP$. Given a polynomial
$f\in\FX$ as input along with $k$ in unary, can we decompose it as a
$k$-sum of products of \emph{three} homogeneous polynomials:
\[
f = a_1b_1c_1 + a_2b_2c_2 + \dots + a_kb_kc_k,
\]

where each $a_i$ is degree $d_1$, each $b_i$ is degree $d_2$, and
each $c_i$ is degree $d_3$?

It turns out that even in the simplest case when $f$ is a cubic
polynomial and the $a_i, b_i, c_i$ are all homogeneous linear forms,
this problem is NP-hard. The tensor rank problem: given a
$3$-dimensional tensor $A_{ijk}, 1\le i,j,k\le n$ checking if the
tensor rank of $A$ is bounded by $k$, which is known to be NP-hard
\cite{Has90} is easily shown to be polynomial-time reducible to this
decomposition problem.

Indeed, we can encode a three-dimensional tensor $A_{ijk}$ as a
homogeneous cubic noncommutative polynomial
$f=\sum_{i,j,k\in[n]}A_{ijk}x_iy_jz_k$, such that any summand in the
decomposition, which is product of three homogeneous linear forms,
corresponds to a rank-one tensor. This allows us to test whether a
tensor can be decomposed into at most $k$ rank-one tensors, which is
equivalent to testing whether the rank of the tensor is at most $k$.

\section{Concluding Remarks and Open Problems}\label{five}

The main open problem is the complexity of noncommutative polynomial
factorization in the general case. Even when the input polynomial
$f\in\FX$ is given in sparse representation we do not have an
efficient algorithm nor any nontrivial complexity-theoretic upper
bound. Although polynomials in $\FX$ do not have unique factorization,
there is interesting structure to the factorizations
\cite{cohn,cohn-ufd} which can perhaps be exploited to obtain
efficient algorithms.

In the case of irreducibility testing of polynomials in $\FX$ we have
the following observation that contrasts it with commutative
polynomials.  Let $\F$ be a fixed finite field. We note that checking
if $f\in\FX$ given in sparse representation is \emph{irreducible} is
in coNP. To see this, suppose $f$ is $s$-sparse of degree $D$. If $f$
is reducible and $f=gh$ is any factorization then each monomial in $g$
or $h$ is either a prefix or a suffix of some monomial of $f$. Hence,
both $g$ and $h$ are $sD$-sparse polynomials. An NP machine can guess
$g$ and $h$ (since coefficients are constant-sized) and we can verify
if $f=gh$ in deterministic polynomial time.

On the other hand, it is an interesting contrast to note that given an
$s$-sparse polynomial $f$ in the \emph{commutative} ring
$\F[x_1,x_2,\ldots,x_n]$ we do not know if checking irreducibility is
in coNP. However, checking irreducibility is known to be in RP
(randomized polynomial time with one sided-error) as a consequence of
the Hilbert irreducibility criterion \cite{Kalt}. If the polynomial $f$
is irreducible, then if we assign random values from a suitably large
extension field of $\F$ to variables $x_2,\ldots,x_n$ (say,
$x_i\leftarrow r_i$) the resulting univariate polynomial
$f(x_1,r_2,\ldots,r_n)$ is irreducible with high probability.

Another interesting open problem that seems closely related to
noncommutative sparse polynomial factorization is the problem of
finite language factorization \cite{SY99}. Given as input a finite
list of words $L=\{w_1,w_2,\dots,w_s\}$ over the alphabet $X$ the
problem is to check if we can factorize $L$ as $L=L_1L_2$, where $L_1$
and $L_2$ are finite sets of words over $X$ and $L_1L_2$ consists of
all strings $uv$ for $u\in L_1$ and $v\in L_2$. This problem can be
seen as noncommutative sparse polynomial factorization problem where
the coefficients come from the \emph{Boolean ring} $\{0,1\}$. No
  efficient algorithm is known for this problem in general, neither is
  any nontrivial complexity bound known for it \cite{SY99}. On the
  other hand, analogous to factorization in $\FX$, we can solve it
  efficiently when $L$ is \emph{homogeneous} (i.e.\ all words in $L$
  are of the same length). Factorizing $L$ as $L_1L_2$, where $L_1$
  and $L_2$ are variable-disjoint can also be efficiently done by
  adapting our approach from Section~\ref{two}. It would be
  interesting if we can relate language factorization to sparse
  polynomial factorization in $\FX$ for a field $\F$. Is one
  efficiently reducible to the other?

\end{document}